\documentclass{article}
\usepackage[utf8]{inputenc}

\usepackage{graphicx}
\usepackage{amsmath}
\usepackage{amsfonts}
\usepackage{amssymb}
\usepackage{amsthm}
\usepackage{enumitem}
\usepackage{breqn}
\usepackage{algorithm}
\usepackage{algpseudocode}
\usepackage[font=small,labelfont=bf]{caption}

\newtheorem{definition}{Definition}[section]

\newtheorem{lemma}[definition]{Lemma}
\newtheorem{theorem}[definition]{Theorem}
\newtheorem{corollary}[definition]{Corollary}
\usepackage{changepage}
\usepackage{mathtools}
\newcommand{\figureWidthAdjustment}{-2cm}
\newcommand\Set[2]{\{\,#1\mid#2\,\}}

\usepackage{multirow}
\usepackage{hhline}
\usepackage{makecell}
\usepackage[T1]{fontenc}
\usepackage[utf8]{inputenc}
\usepackage{xcolor}
\usepackage{array}
\usepackage{subcaption}
\usepackage{hyperref}

\begin{document}

\title{Stigmergy-based, Dual-Layer Coverage of Unknown Indoor Regions}
\author{Ori Rappel, Michael Amir, and Alfred M. Bruckstein}

\maketitle 

\begin{abstract}
We present algorithms for  uniformly covering an unknown indoor region with a swarm of simple, anonymous and autonomous mobile agents. The exploration of such regions is made difficult by the lack of a common global reference frame, severe degradation of radio-frequency communication, and ground obstacles. We propose addressing these challenges by using airborne agents, such as Micro Air Vehicles, in dual capacity, both as mobile explorers and, once they land, as beacons that help  other agents navigate the region. 

The algorithms we propose are designed for a swarm of identical ant-like agents with local sensing capabilities. The agents enter the region, which is represented as a graph, over time from one or more entry points and are required to occupy all of its vertices. Unlike many works in this area, we consider the task of informing an outside operator with limited information that the coverage mission is complete. Even with this additional requirement we show,  both through simulations and mathematical proofs, that the dual role concept results in linear-time termination, while also improving many well-known algorithms in the literature in terms of energy use.\footnote{This document is the full version of the corresponding AAMAS 2023 (International Conference on Autonomous Agents and Multiagent Systems) paper ``Stigmergy-based, Dual-Layer Coverage of Unknown Regions'', containing additional details and analysis.}

\end{abstract}




         
\newcommand{\BibTeX}{\rm B\kern-.05em{\sc i\kern-.025em b}\kern-.08em\TeX}


\section{Introduction}


Rescue workers coming to a rescue scene have neither an up-to-date map of the region nor a way of monitoring hazards such as harmful chemical leakages or structural instability. For the workers to safely and efficiently navigate the scene, it is desirable to continuously monitor it before and during the workers' operations. Several works have proposed addressing this problem by flooding the region with a swarm of simple, airborne agents prior to workers' entrance  \cite{howard_incremental_2002,hsiang_algorithms_2004,fekete_deployment_2008,hideg2022improved}. The robots' purpose is to explore and monitor the scene in a distributed fashion,  providing rescue workers with real time,  potentially life-saving information. 

Swarm-robotic algorithms whose purpose is to flood an unknown region with robots are broadly referred to as ``uniform dispersion'' algorithms in the literature \cite{hsiang_algorithms_2004}. Uniform dispersion algorithms are designed with an indoor, unknown rescue scene in mind. Such a setting poses  four unique challenges for multi-robot systems: (i) robots' movements may be hampered by rubble on the ground or other obstacles, (ii) lack of a common reference frame makes coordination based on common landmarks difficult or even impossible, (iii) robots have limited energy stores and can only operate at full capacity for a limited time, and (iv) in difficult indoor rescue regions, severe multi-path and signal attenuation errors limit radio frequency communication between the robots. Although numerous works have been written about each of these challenges as separate topics (see below the ``Related work'' section), we are not aware of any prior work about uniform dispersion that attempts to address all four challenges. 

In this work, we propose overcoming these challenges by using Micro Air Vehicles (e.g., small  quadcopters), which we will refer to as agents. We envision agents small enough to navigate difficult indoor environments (such as collapsed buildings or underground tunnels) and to settle inside. The agents enter the region one after another via predefined entry points and explore it in a distributed manner according to a local sensing-based algorithm. Agents gradually create \textit{beacons} throughout the region. These beacons help agents navigate  through local visual signals, and enable real time  monitoring and pathfinding inside the region.


This kind of airborne swarm approach is limited by several factors: first, using  airborne agents comes at the cost of limited flight time due to their high power consumption and limited battery capacity. Next, although rescue scenes such as collapsed buildings befit small agents, their limited payload capability severely limits their ability to carry and dispense the aforementioned beacons. To overcome both energy and payload limitations, rather than have beacons be something that the agents dispense throughout the region, our idea is to let the agents settle down as soon as possible and act as beacons themselves, so as both to  conserve energy and guide the remaining flying agents. This idea draws upon the bio-inspired concept of \textit{stigmergy}--communication via the environment \cite{zedadra2017multi,kennedy_swarm_2001,stirling_energy-efficient_2010}. In swarm robotics, stigmergy is usually implemented through agents \textit{deploying} beacons, ``virtual pheromones'', or other kinds of marks inside the environment. In contrast, our implementation of stigmergy utilizes \textit{agents themselves} as a medium to change the state of the environment by having agents \textit{become} beacons.

The algorithms we propose successfully cover regions that can be represented as a graph $\mathbb{G}(V,E)$, where vertices represent locations and edges between them represent the ability of an agent to move from one location to the next. For the sake of concreteness, we model the rescue scene as a discrete 2D grid graph composed of equally sized cells (Figure \ref{fig:simulationexampleregions}). We assume that each cell can contain at most two agents: one agent on the ground serving as a beacon, and one agent flying above-ground. Once an agent enters the region it moves autonomously according to a local sensing-based algorithm. As is common in swarm robotics, we assume the agents to be anonymous and identical, hence executing the same algorithm. The algorithm results in an  environment uniformly covered with beacons, such that each cell of the grid graph contains a beacon. We assume agents are capable of reading local visual signals from nearby beacons. Visual signals (i.e., lights)  can encode  information using color and blinking frequency, as shown in e.g. \cite{maxseiner2021visiblelights}. Since the agents’ direct (e.g., radio) communication capabilities are severely degraded,  they rely on these signals to navigate the environment. 

\textbf{Results.}\space\space We present several stigmergy-based ``gradient ascent''  algorithms for uniformly filling the region with beacons and study their time to completion and energy use via simulations and formal analysis. The algorithms work by maintaining and growing a subregion of beacons that generate a virtual ``gradient'' which flying agents ascend to efficiently find new locations to explore. We prove that the algorithms’ time to completion is linear in the size of the environment in both a synchronous and a bounded asynchronous time setting (in the sense of \cite{defago_fault-tolerant_2006}). The best-performing algorithm that we study is the ``Single-Layer Unlimited Gradient'' (SLUG) algorithm. We compare SLUG to recent works \cite{amir_fast_2019} as well as to the well-known, non-stigmergy based algorithms of Hsiang et al. \cite{hsiang_algorithms_2004} and find that SLUG's energy use is significantly lower, suggesting  that dual layer-based approaches to coverage have significant energy savings over traditional approaches. Desirably, we also show that SLUG can recover from several kinds of crashes and errors. 

An often overlooked topic in swarm robotics which is especially relevant in non-synchronous time settings (where we cannot infer much from the passage of time) is the ability of an operator to \textit{detect} mission completion. In this work, we take the  perspective that an external observer’s (e.g., a rescue worker's)  ability to tell whether the agents have completed the coverage mission is just as important as the agents’ ability to do so. Hence, in our view, the agents’ labor is not complete until they have managed to somehow deliver information about the job's completion to such an   observer. The primary metric we study and minimize is thus termination time - the earliest point in time in which robots can inform an external observer monitoring only the region's entry points that the region is completely covered with beacons. We show that the observer can quickly and correctly be informed of mission termination through back-propagation of the beacons' states. 

\textbf{Related work.} 
Many works have been written on the topic of multi-agent deployment under constraints  \cite{peleg_distributed_2005,altshuler2018introduction,cortes_coverage_2004,optimal_physical_sorting,tran2021robust}. The problem of completely filling an environment with robots in a decentralized fashion is often called ``uniform dispersion.'' The version of uniform dispersion we consider in this work can be traced back to Hsiang et al. \cite{hsiang_algorithms_2004} or to Howard et al. \cite{howard_incremental_2002,howard_mobile_2002}. The literature on uniform dispersion and related problems can be organized according to assumptions regarding the capabilities of the agents, the time scheme used (synchronous or asynchronous), the initial location of the agents, and the properties of the region. As an example,  in \cite{hsiang_algorithms_2004}, agents are assumed capable of direct (radio-based) inter-agent communication, time is synchronous, agents are initially located outside the region, and the region is grid-like.  Conversely, in some works the initial location of the agents may be inside the region \cite{barriere_uniform_2011,howard_incremental_2002}.  The region to be covered may be a continuous part of the two-dimensional plane  \cite{stirling_energy-efficient_2010,howard_incremental_2002,mulgaonkar_robust_2018,rebuttal_icracoverage_ozdemir2019spatial} or modelled as a discrete graph (usually as a grid graph) \cite{hsiang_algorithms_2004,fekete_deployment_2008,flocchini_uniform_2014,barriere_uniform_2011,hideg2020asynchronous,hideg2022improved}. 

In this work we broadly follow Hsiang et al. \cite{hsiang_algorithms_2004}'s setting and assume that our agents are initially located outside the region, gradually entering it via predefined \textit{entry points}. However, whereas Hsiang et al.'s algorithms require synchronous time and direct inter-agent messaging, our algorithms only require indirect visual communication and can operate in a bounded asynchronous time scheme. Additionally, unlike \cite{hsiang_algorithms_2004} and most works in this area, we focus on external detection of coverage completion rather than just achieving coverage. Hsiang et al.'s problem setting and model are extended in numerous works including \cite{fekete_deployment_2008,flocchini_uniform_2014,barriere_uniform_2011,hideg2020asynchronous,hideg2022improved,amir_fast_2019,amir_minimizing_2019}. The closest work to ours among these is 
\cite{amir_fast_2019}, in which the concept of covering an environment with two layers of agents is similarly explored. The coverage algorithm presented in \cite{amir_fast_2019} (``DLTT'') is used as a benchmark in our simulation results. We show that SLUG  beats it significantly in terms of termination time and energy use.

Stigmergy-based graph exploration strategies in which agents place beacons in the environment to aid the exploration process are widely studied \cite{rebuttal1disser2019tight,rebuttal2funabiki2020range}. In contrast to the majority of these works, our robots do not deploy beacons in the environment, but \textit{become} them. We show that using \textit{agents themselves} as beacons results in highly attractive properties, including significant energy savings and ant-like recovery from crashes and errors. The idea of using settled airborne agents as beacons was first studied by Stirling et al. \cite{stirling_energy-efficient_2010,martinoli_energy-time_2013} and later by Aznar et al. \cite{aznar_modelling_2018,aznar_energy-efficient_2018} with the objective of exploring (not coverage) and in \cite{amir_fast_2019,rappel_area_2019} with the objective of coverage.  In \cite{rappel_area_2019} the authors present a uniform coverage algorithm based on stigmergy and depth-first search, however the number of agents in each cell is not limited and agents possess unique IDs. The assumption of unique IDs is made in \cite{stirling_energy-efficient_2010, aznar_gregori_uav_2020,martinoli_energy-time_2013,aznar_modelling_2018} as well. Conversely, in this work and in \cite{amir_fast_2019} (whose DLTT algorithm we improve upon, as per the previous paragraph) it is assumed all robots in the swarm are anonymous and identical. 

Energy is the primary constraint on the performance of swarms composed of flying agents, and energy consumption is often measured as a performance metric \cite{stirling_energy-efficient_2010,amir_minimizing_2019,rappel_area_2019,aznar_modelling_2018}. In \cite{amir_minimizing_2019}, a time and energy-optimal uniform dispersion algorithm for simply connected environments is presented alongside theoretical lower bounds on energy use in general environments. Aznar et al. \cite{aznar_modelling_2018} propose predicting energy consumption via recurrence equations.  We study our algorithms' energy use in Section \ref{section:simulations}, measuring both the maximum individual energy use of an agent and the cumulative energy use of the swarm. Our thesis, which is corroborated by simulation results (Section  \ref{section:simulations}), is that a stigmergy-based approach where agents can land and become beacons leads to significant energy savings.

\vspace{-2mm}
\section{Model and Problem Formulation}

In this section we define the \textit{beacon coverage problem}, a type of uniform dispersion task  wherein a swarm of agents seeks to completely cover a discretized, a priori unknown environment $\textbf{R}$ with beacons and to detect termination, such that in the final configuration every location in $\textbf{R}$ contains a beacon and an external observer monitoring only the state of the agents at the region's entry points is capable of detecting that the coverage mission is complete. 

We model $\textbf{R}$ as a graph $\mathbb{G}(V,E)$ whose vertices, $V$, represent locations in $\textbf{R}$ and whose edges, $\textbf{E}$, represent the connections between locations. We define $n = |V|$ and $m = |E|$. For simplicity, we assume here that $\mathbb{G}$ is an arbitrary grid graph composed of equally sized cells. However, this assumption is not required by our algorithms and formal analysis, both of which can readily be extended to arbitrary graph environments.

We assume a sufficiently large swarm of identical and anonymous robots $a_1, a_2, \ldots$ is ready to enter the region one by one via  predefined entry points  located somewhere in  $\textbf{R}$. $a_i$ is defined to be the $i$th robot that enters  $\textbf{R}$.

\subsection{The Agents}

At all times, an agent is either in flight and searching for a place to settle, or has already settled somewhere in $\textbf{R}$. The former type of agent is called \textit{mobile} whereas the latter is called a \textit{beacon}.  Beacons have two sub-states: they can be \textit{open} or \textit{closed}. Beacons remain in place forever, conserving energy, and their function is to provide simple information to nearby mobile agents via visual signals.

Agents are assumed to be identical (they all act according to  the same algorithm), anonymous (have no identifiers) and autonomous. Agents wake up once per time step and perform a Look, Compute and Move action cycle. An agent can sense other agents' presence at any vertex neighboring its location. As we assume $\textbf{R}$ is a grid graph, this means agents sense adjacent cells at Manhattan distance $1$ from themselves. Mobile agents are allowed to move to an adjacent cell as long as that cell contains no other mobile agent. Mobile agents can settle and become beacons at any adjacent cell that contains no other beacons. Hence, each cell of $\textbf{R}$ may contain at most one mobile agent and one beacon, reflecting the idea that beacons have landed on the ground and mobile agents can fly above them.

\begin{definition}
The set of agents and cells sensed by agent $a_i$ at time $t$, denoted $\xi(i,t)$, is called ``the neighborhood of $a_i$ at time $t$.''
\end{definition}

All agents are capable of detecting the state of agents in their neighborhood (mobile, open beacon, closed beacon). Each agent $i$ further keeps track of an integer called a step count $s_i$. Beacons are capable of projecting their step counts to neighboring agents.

\subsection{Time Model}
\label{section:timescheme}

The algorithms we describe work in many kinds of synchronous, semi-synchronous and asynchronous time settings. For the purposes of formal modelling and time bound analysis, in this paper we focus on a type of fair 2-bounded asynchronous scheduler (in the sense of \cite{defago_fault-tolerant_2006}).  Time, $t$, is divided into integer time steps $t = 0,1,2,3,\ldots$, such that each time step is further sub-divided into $M$ equal sub-time steps of duration $dt=1/M$. Let $U(0,M-1)$ denote the uniform distribution over the integers $\{1,2,\ldots M-1\}$, and  let $t_{i,k}$ denote the $k$th time that agent $a_i$ wakes up. We define:

\begin{equation}
    t_{i,k} = k + \frac{U(0,M-1)}{M}
\end{equation}

In other words, at every time step, each agent wakes up during a randomly chosen sub-time step (see Figure \ref{fig:timescheme}). An agent that wakes up at time $t$ senses the configuration of agents at the previous sub time step $t-dt$. If two agents attempt to move to the same location at the same sub time-step, we assume one of them stays put arbitrarily.
This kind of time scheme is applicable to scenarios in which agents have an internal timer that wakes them up at some fixed nominal rate, but are not controlled by a centralized scheduler that guarantees precise synchronization between the agents’ clocks. 

\begin{figure}
    \centering
    \includegraphics[width=120mm]{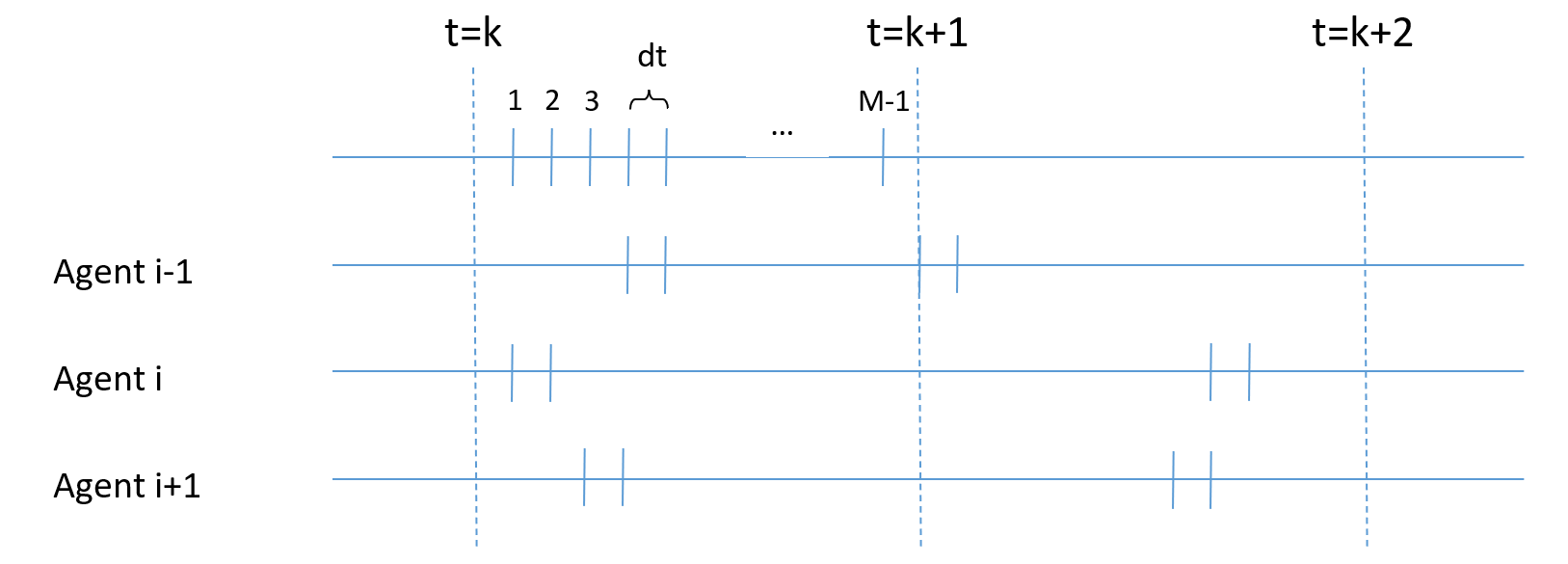}
    \vspace{-2mm}
    \caption{An illustration of the time scheme. Large vertical bars indicate the beginning of time steps; the short bars indicate sub-time steps. The randomly chosen wake-up times of agents $i-1$, $i$ and $i+1$ are depicted.}
    \vspace{-3mm}
    \label{fig:timescheme}
\end{figure}

For the rest of this work, unless stated otherwise, the phrase ``at time step $t$'' refers to the time interval $[t,t+1)$, and the phrase ``at time $t$'' refers to the configuration of the agents at sub-time step $t$, \textit{after} any agent wake-ups.

As a special case, our time bound result   (Theorem \ref{theorem:dllgterminationtime}) holds  also for synchronous time schemes where all agents always wake up at the same sub-time step.


\vspace{-2mm}
\subsection{Agent Entry Model}
\label{section:entrymodel}

Agents are initially outside the region and enter the region one by one via predefined \textit{entry points}. We assume a constant rate entry model in which the time between successive entry attempts of agents to the region through an entry point is bounded by an integer parameter  $\Delta T  \geq 1$. Specifically, let $p$ be an entry point in $\textbf{R}$. For every integer $k \geq 0$, an agent will enter $p$ at the earliest sub-time step $t_0 \in [k \Delta T, (k+1) \Delta T)$ at which $p$ contains no mobile agent. Since $t_0$ is unique by definition (there is just one ``earliest'' time $t \in [k \Delta T, (k+1) \Delta T)$ where $p$ is empty), at most $k$ agents can enter the region before time $k\Delta T$.  An agent entering the environment at a given time step remains there without waking for the rest of the current time step and activates only at the subsequent time step.

Although many works in the literature of robotic coverage assume similar entry models, the effect of the entry process on mission completion time is rarely explored in depth. The interaction between the entry delay $\Delta T$ and the coverage algorithm is especially important when trying to develop a tight upper bound on the time to mission completion. We study this topic in Sections  \ref{section:DLLG} (analytically) and \ref{section:simulations} (empirically). 


\subsection{The Agents' Mission}
\label{section:objective}

It is common in works on multi-robot coverage to place primary importance on the amount of time a coverage algorithm takes to complete. One limitation of this approach is that an external observer cannot necessarily tell that coverage has been completed, and might not be able to proceed to the next stage of their operation without such information. In this work, we take the perspective that an external observer’s ability to tell whether the agents have arrived at the desired configuration is just as important as the agents’ ability to do so, and so the agents’ mission is not complete until they have managed to somehow deliver this information to such an   observer. In light of this, our agents’ mission is both (a) to completely cover $\textbf{R}$ with beacons and (b) to deliver this information back to external observers. Since $\textbf{R}$ is an a priori unknown region, we assume external observers can only see and read the states of the agents at each of $\textbf{R}$’s entry points. We then follow the convention that, to announce mission completion, each entry point must contain a closed beacon. Formally:

\begin{definition}
The \textbf{termination time}, $T_C (\textbf{R})$, of the multi-agent system  with respect to an environment $\textbf{R}$ is the first time $t$ at which every entry point contains a closed beacon. The algorithm is said to have executed correctly if every cell contains an (open or closed) beacon at termination time. The \textbf{termination time step} is  defined as the first time step after termination occurs, i.e., $\lceil T_C (\textbf{R}) \rceil$.
\label{def:terminationtime}
\end{definition}

We are also interested in the \textit{total energy use} of a multi-agent system as a measure of its efficiency. We assume here that the power consumption of a mobile agent is constant and that once an agent becomes a beacon its energy consumption is 0 (since it no longer needs to fly). Hence energy consumption is linear with respect to the time an agent is mobile, and the total energy used is the sum of the flight times of all the agents:

\begin{definition}
The energy consumption of an agent $a_i$, denoted $E_i$, is the duration of time $a_i$ spends as a mobile agent after entering $\textbf{R}$ up to  time $T_C(\textbf{R})$.
\end{definition}

\begin{definition}
\label{def:totalenergy}
Assuming $k$ agents entered $\textbf{R}$ before time $T_C(\textbf{R})$, the swarm's total energy consumption is equal  to $\sum_{i=0}^k E_i$.  Total energy consumption is  denoted by $E_{total}(\textbf{R})$.
\end{definition}

We may also consider the \textit{maximal energy use} of an agent in the system. Whereas total energy measures the swarm's cost-effectiveness (e.g., how much fuel or electricity it consumes), maximum energy tells us what energy storage capabilities our robots require to complete the mission:

\begin{definition}
\label{def:maximumenergy}
The maximum energy consumption of a multi-agent system with respect to an environment $\textbf{R}$ is equal to $\max  E_i$, the maximum energy used among all agents at termination time.
\end{definition}

We note that even after the termination time, redundant mobile agents may remain in $\textbf{R}$ and continue consuming energy. Although post-termination energy use is not one of the main metrics we explore in this paper, we will later show that beacons can be used to send redundant agents back to entry points where they can be recovered and stop consuming energy.

\section{Algorithms}

In this section we describe two beacon coverage algorithms: ``Dual-Layer Limited Gradient,'' abbreviated \textit{DLLG}, and ``Single-Layer Unlimited Gradient,'' abbreviated \textit{SLUG}. DLLG is a relatively straightforward algorithm that completely fills the region with two layers of agents: beacons and mobile agents. We study its properties and prove that, when there is one entry point, its termination time step obeys the inequality  $ (2n-1)\Delta T+1 \leq \lceil T_C (\textbf{R}) \rceil \leq (2n-1)\Delta T+2$ for  $\Delta T \geq 2$. We then discuss SLUG, which is a strict improvement over DLLG and uses DLLG as a building block. SLUG improves DLLG in two ways: by making better use of closed beacons, and by enabling greater freedom of movement for the robots. These improvements can be implemented separately. In Section \ref{section:simulations}, we study (numerically) the contribution of each type of improvement to the algorithm's performance in terms of termination time and energy use. 

In addition to its performance improvements over DLLG, SLUG can recover from various errors and crash faults - see Section \ref{section:crashfaults}.

\subsection{Dual-Layer Limited Gradient}
\label{section:DLLG} 

The idea of DLLG is to have mobile agents read the step counts of neighboring beacons and treat them as a gradient which they must ascend. Specifically, at every time step, mobile agents with step count $s_i$ attempt to move to neighboring locations that contain a beacon with step count $s_i+1$. Hence, the step count of every mobile agent necessarily equals the step count of the beacon it flies over, and agents are always “climbing” the gradient defined by the beacons in single-step increments. Mobile agents themselves become beacons over time: as a first priority, whenever a mobile agent sees a neighboring empty cell, it attempts to settle in that cell, changing its state to “open beacon” and thereafter signaling its step count to any agent that sees it. Beacons become “closed” in a backpropagating fashion until each entry point contains a closed beacon. Specifically, open beacons become “closed” when they contain a mobile agent, all their neighboring cells contain beacons, and every neighboring beacon with higher step count (if any such neighbor exists) is a closed beacon. This 
condition guarantees that the algorithm terminates when precisely $2n$ agents are present in the region: $n$ mobile agents and $n$ closed beacons.

DLLG is a distributed algorithm that is implemented via a local action rule that all agents in $\textbf{R}$ execute at every time step. The local rule for agent $a_i$ at time $t$ is defined by Algorithm \ref{algorithm:DLLG}. Lines 1-9 of Algorithm \ref{algorithm:DLLG} govern mobile agents' behavior and we shall refer to them as the \textit{mobile agent action rules}. Lines 10-11 govern the behavior of beacons and shall be referred to as \textit{beacon action rules}. In DLLG, beacon action rules are only necessary for detecting termination; the mobile agent action rules are entirely sufficient for the coverage process itself.

\begin{algorithm}[!ht]
\caption{Local rule for agent $a_i$ waking up at time $t+dt$ and executing Dual-Layer Limited Gradient.}\label{algorithm:DLLG}
\begin{algorithmic}[1]
\Require $a_i$ has entered $\textbf{R}$
\If{$a_i$ is a mobile agent}
\If{$\xi(a_i,t)$ contains an empty cell $u$}
\State $a_i$ moves to $u$ 
\State $a_i$ sets state to ``open beacon''
\State $s_i \gets s_i + 1$ 
\ElsIf{$\xi(a_i,t)$ contains a cell $u$ with exactly one agent $a_j$ \textbf{and} $a_j$ is an open or closed beacon \textbf{and} $s_j = s_i + 1$}
\State $a_i$ moves to $u$
\State $s_i \gets s_i + 1$ 
\EndIf
\ElsIf{$a_i$ is an open beacon \textbf{and} there is a mobile agent at $a_i$'s location \textbf{and} every beacon $a_j \in \xi(a_i,t)$ with $s_j > s_i$ is closed \textbf{and} $\xi(a_i,t)$ does not contain an empty cell}
\State $a_i$ sets state to ``closed beacon''
\EndIf
\end{algorithmic}
\end{algorithm}


We can think of DLLG as creating a directed acyclic graph $\mathcal{D}$ whose vertices are beacons and where there is an edge from a beacon $a_i$ to any neighboring beacon whose step count is $s_i+1$ (Definition \ref{def:DacyclicDLLG}). Mobile agents are only allowed to move via directed edges of this DAG, and they follow the paths induced by its edges until the region is completely filled.

\subsubsection*{Analysis.}

The rest of this section concerns the mathematical analysis of DLLG. We prove that when $\Delta T \geq 2$, the termination time step $\lceil T_C (\textbf{R}) \rceil$ of DLLG over any environment with one entry point obeys the inequality $ (2n-1)\Delta T+1 \leq \lceil T_C (\textbf{R}) \rceil \leq (2n-1)\Delta T+2$ (Theorem \ref{theorem:dllgterminationtime}). This bound on termination  is deterministically true regardless of the randomly determined  wake-up times of the agents in a given run of DLLG, as long as we agents obey our bounded asynchronous time scheme; in particular, it remains true if we assume all agents always wake up at the same sub-time step, i.e., remains true in a synchronous time scheme.

\begin{theorem}
If $\Delta T \geq 2$, the termination time step $\lceil T_C (\textbf{R}) \rceil$ of DLLG  over any environment $\textbf{R}$ with $n$ cells and one entry point obeys the inequality $ (2n-1)\Delta T+1 \leq \lceil T_C (\textbf{R}) \rceil \leq (2n-1)\Delta T+2$.
\label{theorem:dllgterminationtime}
\end{theorem}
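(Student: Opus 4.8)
The plan is to split the argument into a lower and an upper bound, both resting on the characterization—already asserted in the description of DLLG—that the entry point becomes a closed beacon, and hence termination occurs, exactly when all $2n$ agents are present: every cell then holds one closed beacon and one mobile agent. I would take this as the meaning of $T_C(\textbf{R})$ and reduce the theorem to controlling when the $2n$-th agent enters and how quickly the entry point closes thereafter.

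The lower bound is immediate from the entry model. Since at most $k$ agents can enter before time $k\Delta T$, at most $2n-1$ agents are present strictly before time $(2n-1)\Delta T$; as termination requires $2n$ agents, $T_C(\textbf{R}) > (2n-1)\Delta T$. Moreover the $2n$-th agent, even if it enters at exactly $(2n-1)\Delta T$, only activates on the following time step, so $\lceil T_C(\textbf{R})\rceil \ge (2n-1)\Delta T + 1$.

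For the upper bound I would establish two claims. First, that entry proceeds at the maximal rate: by induction on the window index $k$, the entry point $p$ holds no mobile agent at the start of window $k$, so the next agent enters within the first time step of its window. The inductive step is where $\Delta T \ge 2$ is used—an agent that enters activates on the next step and, always finding either an empty cell to settle in or a higher-step beacon to climb, vacates $p$ before the window ends, except for the final arrival $a_{2n}$, which finds every neighbor occupied and must stay (exactly what is needed to close $p$). This places $a_{2n}$'s entry by time $(2n-1)\Delta T + 1$. Second, that closing keeps pace with entry: the first $n$ agents settle into open beacons whose step counts grow with depth from $p$ in the DAG $\mathcal{D}$, while the remaining $n$ mobile agents climb and stack from the highest step counts downward, so the last cell to retain a mobile agent is $p$ itself. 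The closing condition then back-propagates one step-count level per time step; since mobile agents reach successive levels at least one step apart (because $\Delta T \ge 2$) while the wave advances one level per step, closing is gated by agent arrival rather than lagging it, so every beacon but $p$ is already closed once $a_{2n}$ settles. Hence $p$ closes within one more activation, giving $\lceil T_C(\textbf{R})\rceil \le (2n-1)\Delta T + 2$, with the extra step absorbing the asynchronous offset between entry and activation.

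The hard part will be making the second claim rigorous for an arbitrary grid graph rather than a path. I would need a precise invariant stating, at each time step, which cells are filled, which beacons are closed, and where mobile agents sit, and would have to verify that it is preserved under every admissible asynchronous schedule—checking that simultaneous-move tie-breaking, transient traffic along $\mathcal{D}$, and the random wake-up offsets never let the closing wave fall behind the entry pipeline by more than the $O(1)$ slack the theorem allows. Showing that this lag stays bounded uniformly over topologies and schedules is the crux; the counting and the lower bound are by comparison routine.
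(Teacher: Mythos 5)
Your decomposition (entry pipeline plus closing lag) and your lower bound essentially coincide with the paper's, but the two claims your upper bound rests on are, respectively, unjustified and false---and the distance between them is exactly where the paper's work lies. Claim B's premise fails: run DLLG on a linear path with $\Delta T = 2$, scheduling all beacons to wake before all mobile agents in every time step (this is precisely the schedule the paper uses to show the $+2$ bound is attained). Parked agents do stack from the far end toward $p$, but under this schedule a beacon closes only at the start of a time step at whose beginning its parked agent is already present \emph{and} all of its children are already closed, so the closures trail the parking by one level per time step. Chasing the indices, the beacon on $p$'s child is still \emph{open} at time $(2n-1)\Delta T$ when $a_{2n}$ enters; it closes only during that entry time step, and $p$ closes only during the next one. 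So ``every beacon but $p$ is already closed once $a_{2n}$ settles'' is false precisely in the worst case an upper bound must handle: the $+2$ is the cost of this two-stage terminal cascade, not of ``the asynchronous offset between entry and activation.'' Claim A has the complementary problem: the assertion that an agent at $p$ ``always finds either an empty cell to settle in or a higher-step beacon to climb'' is itself the nontrivial fact. You must rule out that the beacon one level above $p$ carries another mobile agent while still open; nothing in your sketch does this, and the phenomenon genuinely occurs when $\Delta T = 1$ (which is why the paper leaves that case open).

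The single missing idea, which the paper isolates as Lemma \ref{lemma:inductiondllg}, is an \emph{agent-centric} invariant rather than the configuration-centric one you propose to hunt for. Assign each agent $a_i$ a ``depth'': the number of its wake-ups since entering, normalized so that its depth is $0$ at time $(i-1)\Delta T$. An induction over sub-time steps establishes that (1) agents enter exactly on schedule; (2) a mobile agent sitting on an \emph{open} beacon has agent depth equal to that vertex's depth in $\mathcal{D}$, \emph{or} that beacon becomes closed by the end of the current time step; and (3) closed beacons hold permanently parked agents and have all children closed. Item (2), with its escape clause, is the engine: depths of distinct agents on open beacons differ by at least $\Delta T \geq 2$, so no climbing agent ever blocks another, which yields your Claim A in the exact form needed (entry at exactly $(i-1)\Delta T$; the paper's Corollary \ref{corollary:DLLGagentrule}); and applying (2) to $a_{2n}$ itself---after one wake-up its depth is $1 \neq depth(p) = 0$, yet it cannot move---forces $p$ to close by the end of time step $(2n-1)\Delta T + 1$, which is the $+2$ bound with no separate closing-wave analysis at all. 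Without this invariant (or an equivalent), both of your claims remain unproven, as your final paragraph concedes. Note also that your lower bound quietly uses the fact that termination requires $2n$ agents in $\textbf{R}$; that fact is itself a consequence of invariant (3) together with $\mathcal{D}$ spanning $\textbf{R}$, not merely an assertion one may import from the algorithm's prose description.
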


When there are multiple entry points, or when $\Delta T = 1$, intuition and countless numerical simulations suggest that the termination time of any execution of DLLG is bounded by, but can be lower than $(2n-1)\Delta T+2$. The formal time bound analysis of these two situations, and especially the $\Delta T = 1$ case, is (surprisingly) much more complex and is left as an open problem.

For the rest of this section, let $p$ be the lone entry point of $\textbf{R}$. As mentioned earlier in this section,  DLLG  implicitly generates a directed acyclic graph $\mathcal{D}$ which spans $\textbf{R}$. Let us formally define $\mathcal{D}$.

\begin{definition}
\label{def:DacyclicDLLG}
In a given execution of DLLG, define $\mathcal{D}(t)$ to be the subgraph of $\textbf{R}$ rooted at $p$ whose vertices are all the vertices of $\textbf{R}$ containing beacons at time $t$, and where there is an edge $(u,v)$ if $u$ contains a beacon $a_i$ and $v$ contains a beacon $a_j$ such that $s_j=s_i+1$. We define $\mathcal{D}=\mathcal{D}(\infty)$.
\end{definition}

We note that $\mathcal{D}$ might not be the same between different runs of DLLG—its precise structure depends on the stochastic activation times of the agents, as well as their sometimes arbitrary movement decisions. Since it is simpler to think in deterministic terms, for the rest of this section we will be studying a fixed, arbitrary execution of DLLG over $\mathcal{R}$, thus we may assume $\mathcal{D}$ is fixed.

If $\mathcal{D}(t)$ does not contain all cells in $\textbf{R}$, then there is at time $t$ a beacon $a_i$ which is adjacent to an empty cell $u$. There is necessarily also a path in $\mathcal{D}(t)$  from $p$ to $a_i$. DLLG  constantly sends agents across the edges of $\mathcal{D}(t)$, thus eventually an agent will arrive at $u$. This argument shows that $\mathcal{D}(t)$ grows with $t$ until there is a beacon at every cell of $\textbf{R}$. Hence $\mathcal{D}$  contains every vertex of $\textbf{R}$.

To prove Theorem 1 we first need to study the structure of $\mathcal{D}$. We make use of the following definitions:

\begin{definition}
The \textbf{children} of a vertex $v \in \mathcal{D}$ are all vertices reachable from $v$ via a directed path of $\mathcal{D}$. A \textbf{leaf} of $\mathcal{D}$ is a vertex $v \in \mathcal{D}$ with no children.
\end{definition}

\begin{definition}
Let $v$ be a vertex of $\mathcal{D}$ and let $a_i$ be the  beacon at $v$. Let $t$ be the sub-time step in which $a_i$ becomes a beacon. The depth of $v$, denoted $depth(v)$, is the step count $s_i$ of $a_i$ at time $t$ (e.g., $depth(p) = 0$, since the beacon at $p$ has step count $0$).
\label{def:depth_of_vertex}
\end{definition}

\begin{definition}
$dist(p,v)$ is the number of edges of the shortest directed path in $\mathcal{D}$ from $p$ to $v$. 
\end{definition}

\begin{lemma}
$dist(p,v) = depth(v)$
\label{lemma:distptov_equals_depthofv}
\end{lemma}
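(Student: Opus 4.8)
The plan is to prove the equality by establishing two facts about $\mathcal{D}$: that every directed edge raises the depth by exactly one, and that every beacon is reachable from $p$ along directed edges. Together these pin down the length of \emph{every} directed path from $p$ to $v$, and hence the length of a shortest one.

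First I would record the \emph{grading property}. For any directed edge $(u,v)$ of $\mathcal{D}$, Definition \ref{def:DacyclicDLLG} requires the beacons $a_i$ at $u$ and $a_j$ at $v$ to satisfy $s_j = s_i + 1$, while by Definition \ref{def:depth_of_vertex} the depth of a vertex is exactly the step count of the beacon occupying it; this step count is fixed at the moment the agent settles, since the beacon action rules (lines 10--11 of Algorithm \ref{algorithm:DLLG}) never modify it. Hence $depth(v) = depth(u) + 1$ along every edge. Consequently, along any directed path $p = u_0 \to u_1 \to \cdots \to u_\ell = v$ the depth increases by one at each step, so $\ell = depth(u_\ell) - depth(u_0) = depth(v)$, using $depth(p)=0$. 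Thus every directed $p$-to-$v$ path has the same length $depth(v)$; in particular, once I know such a path exists, the shortest one also has length $depth(v)$, which gives $dist(p,v) = depth(v)$.

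It therefore remains to establish reachability: every beacon $v$ is the endpoint of some directed path from $p$. I would prove this by induction on $d = depth(v)$. The base case $d = 0$ holds because $p$ carries the unique step-count-$0$ beacon (any other beacon is created by a mobile agent that increments its step count upon settling, so has step count at least $1$), and the empty path reaches it. For $d \geq 1$, consider the sub-time step at which the beacon $a_j$ at $v$ was created: $a_j$ executed the settling branch (lines 2--5), moving into the then-empty cell $v$ from an adjacent cell $u$ and incrementing its step count to $s_j = d$. So while at $u$ the mobile agent $a_j$ had step count $d-1$; by the gradient invariant of DLLG---a mobile agent's step count always equals that of the beacon it flies over, which holds because agents only move onto beaconed cells via lines 6--8 or enter at $p$ over its root beacon---the cell $u$ contained a beacon of step count $d-1$. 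Since beacons persist forever, $u \in \mathcal{D}$ and $(u,v)$ is a directed edge with $depth(u) = d-1$. Applying the inductive hypothesis to $u$ yields a directed path from $p$ to $u$, which extends through $(u,v)$ to $v$, completing the induction.

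The routine parts---the grading property and the path-length bookkeeping---are immediate from the definitions. The main obstacle is the reachability step, and within it the justification of the gradient invariant used to identify the in-neighbor $u$ as carrying a beacon of step count exactly $depth(v)-1$. I expect the only delicate point to be handling the creation of the root beacon at $p$ and the entry of agents consistently, so that the base case and the ``$u=p$'' instance of the inductive step are both subsumed by a single statement that $p$ always hosts a step-count-$0$ beacon over which entering agents fly.
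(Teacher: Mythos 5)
Your proof is correct, and it is in fact more careful than the paper's own three-sentence argument, while resting on the same core observation. Both proofs hinge on the grading property: beacon step counts never change after settling, so along every edge of $\mathcal{D}$ the vertex depth increases by exactly one, forcing every directed $p$-to-$v$ path to have length exactly $depth(v)$. The difference is in how existence of such a path is handled. The paper identifies paths in $\mathcal{D}$ with possible agent movements ("an agent can only move from $p$ to $v$ in exactly $depth(v)$ steps"), which implicitly borrows the existence of a path from the trajectory of the agent that created the beacon at $v$, and also glosses over the fact that $\mathcal{D}$ may contain edges that no agent ever traversed, so shortest paths in $\mathcal{D}$ need not be trajectories at all. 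Your decomposition avoids both issues: the grading property alone settles the length of \emph{every} path (no appeal to trajectories needed), and reachability is proved separately by induction on $depth(v)$, using the settling event together with the gradient invariant (a mobile agent's step count equals that of the beacon beneath it) to exhibit an in-neighbor of depth $depth(v)-1$. The path you construct is generally not any single agent's trajectory but is stitched together from the final moves of the agents that settled along it, which is a legitimate and arguably cleaner route. The one delicate point you flag—how the beacon at $p$ acquires step count $0$ and how entering agents come to fly over it—is genuinely underspecified in the paper's model (Algorithm \ref{algorithm:DLLG} as written does not obviously produce a step-count-$0$ beacon at $p$), so your explicit assumption that $p$ always hosts such a beacon matches the paper's intent (Definition \ref{def:depth_of_vertex} asserts exactly this) and is the right way to subsume both your base case and the $u=p$ instance of your induction.
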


\begin{proof}
By definition, the directed edges of $\mathcal{D}$ determine which vertices a mobile agent can move to from a given location. In the DLLG algorithm, a mobile agent whose step count is $s$ can only move to vertices of depth $s+1$. Hence, an agent can only move from $p$ to $v$ in exactly $depth(v)$ steps, which means $dist(p,v)$ must equal $depth(v)$. 
\end{proof}

We define next the  ``depth'' of an agent (Definition \ref{def:agentdepth}), a time-dependent value associated with the agent. At every time step $t$, we keep track of the depth of both agents inside $\textbf{R}$ and agents outside $\textbf{R}$ that will enter $\textbf{R}$ in the future. When $a_i$ is inside $\textbf{R}$, we shall show in Lemma \ref{lemma:inductiondllg}  that its depth is related to the depth of the vertex $v$ it is located at (Definition \ref{def:depth_of_vertex}). 

\begin{definition}
\label{def:agentdepth}
Let $W_i(t) \coloneqq \Set{t'}{0 \leq t' \leq t \land \text{$a_i$ wakes up at time $t'$}}$ be the set of wake-up times of agent $a_i$ before time $t$ (inclusive), and define $W_i^*(t) \coloneqq \Set{t' \in \mathbb{Z}}{0 \leq t' \leq t \land \text{$a_i$ has not entered \textbf{R} at time $t'-dt$}}$ to be the set of integer times that $a_i$ begins outside $\textbf{R}$. 

The depth of an agent $a_i$ at time $t$ is defined as follows:  \begin{equation*}depth(a_i,t) \coloneqq |W_i(t)| + |W_i^*(t)|-(i-1)\Delta T\end{equation*}
\end{definition}

The expression $ |W_i(t)| + |W_i^*(t)|$ in Definition \ref{def:agentdepth} counts the number of times $a_i$ has woken up by time $t$, treating the integer times where $a_i$ is outside $\textbf{R}$ as ``virtual'' wake-ups. Every agent increases its depth  exactly once per time step during some sub-time step (note that this is not the same as moving to a vertex of greater vertex depth). Example:  suppose that $\Delta T = 2$, that the agent $a_3$ enters $\textbf{R}$ at time $t = 4$, and that its first wake-up time occurs at time $5+3dt$. Then $depth(a_3, 2) = -2$,  $depth(a_3, 3) = -1$, $depth(a_3, 4) = 0$, $depth(a_3, 5+dt) = 0$, $depth(a_3, 5+3dt) = 1$ (see Section \ref{section:timescheme} for the definition of $dt$). In this example $a_3$ enters $\textbf{R}$ exactly when its depth is $0$. In Lemma \ref{lemma:inductiondllg} we prove this must always be the case.

\begin{definition}
The vertex location of the agent $a_i$ at time $t$ is denoted $v(a_i,t)$.
\end{definition}

\begin{lemma}
\label{lemma:inductiondllg}
Assuming $\Delta T \geq 2$, at any  time $0 \leq t \leq  T_C(\textbf{R})$ the following holds for any vertex $v \in \mathcal{D}$ and any mobile agent $a_i$:
\begin{enumerate}
    \item If $(i-1)\Delta T \leq  t  $, $a_i$ has entered $\textbf{R}$ at time  $(i-1)\Delta T$ (which is the time $a_i$'s depth equals $0$).
    \item If $v(a_i,t)$ contains an open beacon then $depth(a_i,t) =depth(v(a_i,t))$ \textbf{or} the beacon at $v(a_i,t)$ becomes a closed beacon by time $\lceil t \rceil - dt$.
    \item 	If $v$ contains a closed beacon, it contains a mobile agent and all its children are closed beacons.
\end{enumerate}
\end{lemma}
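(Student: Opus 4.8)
The plan is to treat the conjunction of claims (1)–(3) as a single invariant $I(t)$ and prove it by strong induction on the sub-time step $t$, re-establishing $I(t)$ after each event (an agent entry, climb, settling, or closure). First I would dispatch the base case from the initial configuration: $a_1$ enters at time $0=(1-1)\Delta T$ and the depth-$0$ root beacon at $p$ is established, so all three claims hold by inspection. For the inductive step I would fix the first event at time $t$, assume $I(t')$ for every $t'<t$, and argue by cases on the type of transition that each claim survives it. Since vertex depths in $\mathcal{D}$ are fixed (Definition~\ref{def:depth_of_vertex}) while an agent's depth advances by exactly one per time step, the useful bookkeeping fact I would isolate at the outset is that consecutive agents' depths differ by exactly $\Delta T$, so that by claim (2) two mobile agents occupy gradient vertices at least $\Delta T\geq 2$ apart.

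For claim (1) I would induct on $i$ using the entry model together with the dynamics at $p$. Assuming $a_{i-1}$ entered on schedule at $(i-2)\Delta T$, it first wakes during time step $(i-2)\Delta T+1$ and, since $p$ carries the depth-$0$ root beacon, it either settles in an empty neighbor or climbs to a depth-$1$ neighbor, in either case vacating $p$. The hypothesis $\Delta T\geq 2$ gives $(i-2)\Delta T+2\leq(i-1)\Delta T$, so $p$ is free of mobile agents by time $(i-1)\Delta T$; moreover no mobile agent can ever arrive at $p$ from the gradient, since reaching depth $0$ would require a move from a non-existent depth-$(-1)$ vertex. Hence by the entry model the earliest empty sub-time in $[(i-1)\Delta T,i\Delta T)$ is $(i-1)\Delta T$ itself, and $a_i$ enters there; a direct computation with Definition~\ref{def:agentdepth} then confirms $depth(a_i,(i-1)\Delta T)=0$. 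The delicate point I would watch for is ruling out a premature traffic jam at $p$, and this is exactly where the $\Delta T$-spacing of depths is invoked: the blocking agents sit at least two gradient levels ahead and cannot stall the agent at the entry.

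The heart of the argument is claim (2), which I would prove via the coupling among an agent's step count $s_i$, its depth $depth(a_i,t)$, and its host vertex's depth. After $a_i$ climbs an edge of $\mathcal{D}$ (lines 7--8), both $s_i$ and $depth(a_i,t)$ advance by one to match the new vertex depth, restoring the equality $depth(a_i,t)=depth(v(a_i,t))$; this settles the ``climb'' branch. The remaining branch is when $a_i$ is \emph{stuck} on an open beacon of depth $d$: it has no empty neighbor, and every depth-$(d+1)$ neighbor beacon is already occupied by a mobile agent and so is unreachable through line 6. I would then show the closing condition of lines 10--11 is in force \emph{now} precisely because every blocking higher neighbor closed at an \emph{earlier} time step: by the inductive form of claim (3) and the backpropagation of closure, the depth-$(d+1)$ beacons ahead are already closed, so together with the present mobile agent $a_i$ and the absence of an empty neighbor the host beacon closes at the current wake-up, which is no later than $\lceil t\rceil-dt$. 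This is exactly the disjunct the lemma permits, and it is the step I expect to be the main obstacle: I must verify that the downward-propagating closure wave stays exactly one time step ahead of the climbing agents at the frontier, so that a stuck agent always finds its higher neighbors already closed rather than still open-and-occupied (which would block closing and break the dichotomy). This synchronization is what $\Delta T\geq 2$ buys through the depth-spacing, and confirming it under the adversarial asynchronous wake order, along with the tie-breaking rule when two agents contend for one cell, is the crux.

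Finally I would obtain claim (3) as a corollary of the closing rule propagating upward from the leaves. When a beacon $v$ closes, lines 10--11 guarantee it holds a mobile agent and that all its depth-$(d+1)$ neighbors—hence, by the inductive hypothesis applied to those neighbors, all of $v$'s children—are already closed. To see that $v$ retains its mobile agent thereafter, I would note that the agent on a closed beacon can only advance to a depth-$(d+1)$ neighbor, yet every such neighbor is a closed beacon already carrying its own mobile agent (two agents), so line 6's ``exactly one agent'' guard forbids the move; the agent is permanently stuck and $v$ stays occupied. Closure therefore forms a set closed under taking children that grows from the leaves toward $p$, which is precisely what claim (3) asserts and what ultimately drives the backpropagation needed for termination.
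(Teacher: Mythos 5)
Your overall strategy --- strong induction over sub-time steps with (1)--(3) as a joint invariant, the depth-spacing fact extracted from claim (2), the ``stuck agent forces its host beacon to close'' mechanism for claim (2), and the ``parked agents can never move again'' argument for claim (3) --- is essentially the paper's own proof; your treatments of claims (2) and (3) correspond directly to the paper's 3rd case and its argument for (3).

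There is, however, a genuine gap in your argument for claim (1), and it sits exactly at the point you yourself flagged as the crux. You assert that when $a_{i-1}$ first wakes during time step $(i-2)\Delta T + 1$ it ``either settles in an empty neighbor or climbs to a depth-$1$ neighbor, in either case vacating $p$,'' and you rule out a traffic jam at $p$ on the grounds that blocking agents ``sit at least two gradient levels ahead.'' This is false: the depth-spacing fact, being a consequence of claim (2), constrains only mobile agents located at \emph{open} beacons. Mobile agents parked atop \emph{closed} beacons (which exist by claim (3) and accumulate as closure backpropagates toward $p$) are exempt from it, and they can occupy every depth-$1$ neighbor of $p$ while every neighbor of $p$ already contains a beacon, leaving $a_{i-1}$ with no empty cell to settle in and no unoccupied depth-$1$ beacon to climb to. This configuration is not hypothetical --- it occurs in every execution when the last agent $a_{2n}$ enters: it wakes, cannot move, and that stall is precisely what triggers $p$'s beacon to close and termination to be announced. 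Consequently a stall at $p$ cannot be excluded by spacing; it must be excluded by the lemma's time scope. The paper does this by first noting (WLOG) that $p$ still carries an open beacon before $T_C(\textbf{R})$, and then deriving a contradiction with invariant (2): if $a_{i-1}$ were still at $p$ at time $(i-2)\Delta T + 2 - dt$, its depth would be $1$ while $depth(p)=0$, so (2) forces the beacon at $p$ to close by the end of that time step, i.e., termination would occur strictly before the entry time $t+dt \leq T_C(\textbf{R})$ under consideration --- a contradiction with the minimality of $T_C(\textbf{R})$. Your proof needs this contradiction step (or an equivalent appeal to the definition of $T_C(\textbf{R})$); as written, the justification you give for the entry-scheduling claim does not hold.
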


\begin{proof}
We shall prove the lemma by induction on $t$. The base case is straightforward: at any time in $[0,\Delta T)$, there is just one agent at $p$ with depth $0$, thus statements (1)-(3) hold. Now assume for induction that at times $0,1, \ldots,t$, statements (1)-(3) are true. We shall show that that (1)-(3) are true at time $t+dt$ as well.

We first show that (3) must be true at time $t+dt$. Let $v \in \mathcal{D}$ be some vertex containing a closed beacon at time $t+dt$. Note that this means $v$ must contain either a closed or open beacon at time $t$, since open beacons must wake up at least once to become closed. If $v$ contains a closed beacon at time $t$, then by (3), it and all its children contain mobile agents at time $t$. Clearly, none of these mobile agents can ever move at any time after $t$, which means (3) remains true for $v$ at time $t+dt$.  If $v$ contains an open beacon at time $t$, then that beacon becomes a closed beacon at time $t+dt$. According to the DLLG algorithm, this can only occur if $v$ contains a mobile agent at time $t$ and all of $v$'s children are closed beacons at time $t$. By inductive assumption (3), this means all of $v$'s children also contain mobile agents. Hence none of these agents can ever move after time $t$, meaning that also in this case, (3) holds for $v$ at time $t+dt$.

This shows that (3) holds at time $t+dt$. We now need to show (1) and (2) hold for every agent $a_i$ at time $t+dt$. We separate our analysis into cases:

\begin{enumerate}
    \item $a_i$ entered $\textbf{R}$ at time $t+dt$.
    \item $a_i$ changed its location at time $t+dt$. 
    \item $a_i$ did not enter $\textbf{R}$ and did not  change its location at time $t+dt$.
\end{enumerate}

Note that if the entry point $p$ contains a closed beacon at time $t+dt$ then by definition $t+dt \geq T_C(\textbf{R})$, hence we may assume without loss of generality that $p$ contains an open beacon at time $t$.

\textbf{1st case:} Let us assume that at time $t+dt$ the agent $a_i$ entered $\textbf{R}$, and show that (1) and (2) must hold for this agent. By our inductive assumptions, $a_i$ entering $\textbf{R}$ at time $t+dt$ means that the first $i-1$ agents have already entered $\textbf{R}$ at times $\Delta T, 2 \Delta T \ldots (i-2) \Delta T$ respectively. This means that the earliest point in time $a_i$ can enter $\textbf{R}$ is $(i-1)\Delta T$, implying $t+dt \geq (i-1)\Delta T$. It remains to show that $t+dt = (i-1)\Delta T$. Assume for contradiction that $t+dt > (i-1)\Delta T$, then at time $(i-1)\Delta T - dt$ the agent $a_{i-1}$ is necessarily located at $p$, which is what prevents $a_i$ from entering at time $(i-1)\Delta T$. Since $a_{i-1}$ enters $p$ at time $(i-2)\Delta T$ and is still there at time $(i-1)\Delta T - dt$, and since $\Delta T \geq 2$, $a_{i-1}$ must be located at $p$ at time $(i-2)\Delta T + 2-dt$. But   $a_{i-1}$ wakes up exactly once during time step $(i-2)\Delta T + 1$, meaning  $depth(a_{i-1},(i-2)\Delta T + 2 - dt) = 1 \neq depth(p) = 0$ which contradicts inductive assumption (2). We arrived at a contradiction, therefore necessarily $t + dt = (i-1)\Delta T$. This establishes (1) for agent $a_i$ at time $t+dt$. It also establishes (2) for $a_i$ at time $t+dt$, since $depth(a_i,(i-1)\Delta T)=0$ and $depth(p) = 0$.

\textbf{2nd case:} Assume that $a_i$ changed its location at time $t+dt$. This implies that $a_i$ did not enter $p$ at time $t+dt$, since in our model, an agent cannot both enter $\textbf{R}$ and move in one time step. Hence  (1) remains true vacuously for $a_i$. Furthermore, because $a_i$ moved at time $t+dt$, it increased its depth by $1$ and its step count by $1$, meaning (2) must remain true for $a_i$ at time $t+dt$.

\textbf{3rd case:} Assume $a_i$ did not enter $\textbf{R}$ and did not change its location at time $t+dt$. (1) remains true vacuously in this case. If   $a_i$ does not wake up at time $t+dt$, (2) clearly remains true, so let us assume $a_i$ wakes at time $t+dt$. Denote  $t' = \lfloor t+dt \rfloor - dt$. Our inductive assumption says (2) is true at time $t'$, which implies that at time $t'$  all mobile agents $a_j, j \neq i$ for which $v(a_j,t')$ contains an open beacon have vertex depth difference at least $2$: 
\begin{equation}|depth(v(a_j,t')) - depth(v(a_i,t'))| = |depth(a_j,t') - depth(a_i,t')| \geq \Delta T \geq 2\tag{*}\label{eq:depthdifferencelemmaproof}\end{equation} Moreover, since we assume $a_i$ does not move at time step $t+dt$, we know all immediate children of $v(a_i,t+dt)$ (if any exist) contain mobile agents at time $t$. These mobile agents cannot have been located at open beacons at time $t'$ (because they cannot reach a child of $v(a_i,t+dt)$ in one move given (\ref{eq:depthdifferencelemmaproof})), so they must have been located at closed beacons at time $t'$, hence also at time $t$ (since, by (3), agents at closed beacons can never move). Hence, all  children of $v(a_i,t+dt)$ necessarily contain closed beacons at time $t'$. Consequently, by DLLG's beacon action rule, the beacon at $v(a_i,t+dt)$ will wake up and become closed by the end of the current time,  and (2) holds for $a_i$.  \end{proof}


\begin{corollary}
\label{corollary:DLLGagentrule}
DLLG fills the graph with $2n$ agents, such that the last agent always enters at time $(2n-1)\Delta T$.
\end{corollary}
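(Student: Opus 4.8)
The plan is to derive the corollary almost entirely from Lemma \ref{lemma:inductiondllg}, by splitting it into two claims: that exactly $2n$ agents are present in $\textbf{R}$ at termination, and that the last of these, $a_{2n}$, enters at time $(2n-1)\Delta T$. Throughout I would use the standing assumption $\Delta T \ge 2$ inherited from Lemma \ref{lemma:inductiondllg}.

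First I would establish the count of $2n$. At the termination time $T_C(\textbf{R})$ the entry point $p$ contains a closed beacon by Definition \ref{def:terminationtime}. The text already shows that $\mathcal{D}$ spans $\textbf{R}$, and Lemma \ref{lemma:distptov_equals_depthofv} gives $dist(p,v) = depth(v)$, which is finite for every vertex; hence every vertex of $\textbf{R}$ is reachable from $p$ by a directed path, i.e.\ is a child of $p$ in $\mathcal{D}$. Applying statement (3) of Lemma \ref{lemma:inductiondllg} to $p$ then forces $p$ to contain a mobile agent and all of its children — that is, every other vertex — to contain closed beacons; applying (3) once more to each such child shows that it too contains a mobile agent. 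Since each cell holds at most one beacon and one mobile agent (which are distinct agents), this accounts for exactly $2n$ agents, and because no agent ever leaves $\textbf{R}$ under DLLG, exactly $2n$ agents — namely $a_1, \dots, a_{2n}$ — have entered by time $T_C(\textbf{R})$.

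For the entry time of the last agent I would combine the entry model with statement (1) of Lemma \ref{lemma:inductiondllg}. Since termination requires all $2n$ agents to be present, $a_{2n}$ enters at some time $\tau \le T_C(\textbf{R})$; and since at most $k$ agents can enter before time $k\Delta T$, the $2n$-th entry must satisfy $\tau \ge (2n-1)\Delta T$. Thus $(2n-1)\Delta T \le \tau \le T_C(\textbf{R})$, so statement (1) of Lemma \ref{lemma:inductiondllg} applies at $t = \tau$ and pins the entry time of $a_{2n}$ to exactly $(2n-1)\Delta T$.

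I expect the only real friction to lie in the first step: one must verify that ``children'' (reachability via a directed path) combined with the spanning property of $\mathcal{D}$ really does cover all of $\textbf{R}$, so that a single invocation of statement (3) at the root $p$ propagates closure throughout, and that the bookkeeping — at most one mobile agent plus one beacon per cell, and no agent ever leaving — genuinely yields \emph{exactly} $2n$ rather than merely at most $2n$. The remainder is a direct read-off of Lemma \ref{lemma:inductiondllg}.
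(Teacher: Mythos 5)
Your proposal is correct and follows essentially the paper's own route: the paper dispatches this corollary as an immediate consequence of Lemma \ref{lemma:inductiondllg}, statement (1), and obtains the $2n$-agent count from statement (3) combined with the spanning property of $\mathcal{D}$ exactly as you do (that part appears in its proof of Theorem \ref{theorem:dllgterminationtime}). You have simply spelled out the bookkeeping --- reachability of every vertex from $p$, the entry-rate bound forcing $\tau \geq (2n-1)\Delta T$ before statement (1) can be invoked --- that the paper treats as immediate.
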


Corollary \ref{corollary:DLLGagentrule} follows immediately from Lemma  $\ref{lemma:inductiondllg}$, (1). We now prove Theorem \ref{theorem:dllgterminationtime}.


\begin{proof}
Let us first establish the lower bound. Lemma \ref{lemma:inductiondllg}, (3) implies that termination can only occur when $\textbf{R}$ contains $2n$ agents ($n$ mobile and $n$ closed beacons).  Corollary \ref{corollary:DLLGagentrule} says that agent $a_{2n}$ enters at time $(2n-1)\Delta T$. The beacon at  $p$ can only become closed once it wakes up and sees agent $a_{2n}$, hence $\lceil T_C{\textbf{R}} \rceil \geq (2n-1)\Delta T + 1$.

It remains to establish the upper bound. When $a_{2n}$ enters $\textbf{R}$ there are $n$ mobile agents in $\textbf{R}$. By time $(2n-1)\Delta T+2-dt$, $a_{2n}$ has woken up at least once, hence its depth is $1$, but it could not have moved because every vertex in the graph contains a mobile agent. Hence by Lemma \ref{lemma:inductiondllg}, (2), $p$ is closed at time $(2n-1)\Delta T+2-dt$, establishing $\lceil T_C{\textbf{R}} \rceil \leq (2n-1)\Delta T + 2$. \end{proof}
\vspace{-2mm}

The bound of Theorem \ref{theorem:dllgterminationtime} is precise. Whether the termination time step is $(2n-1)\Delta T + 1 $ or $(2n-1)\Delta T + 2 $ depends on the wake-up times of the agents. The upper bound $(2n-1)\Delta T + 2 $   can for example be attained  on a region that is a linear path of $n$ vertices (Figure \ref{fig:simulationexampleregions}, (a)), by always waking up the beacons before the mobile agents at each time step; the lower bound $(2n-1)\Delta T + 1 $ is attained on the same region by always waking the beacons after the mobile agents.


\subsection{Single-Layer Unlimited Gradient}

In the previous section we studied the DLLG algorithm, which terminates with $2n$ agents in $\textbf{R}$ - $n$ agents and $n$ beacons. DLLG is limited by two factors:

\begin{enumerate}
    \item Agents can traverse only the edges of the directed acyclic graph $\mathcal{D}$, but oftentimes there are edges in $\textbf{R}$ that are not in $\mathcal{D}$ which could shorten the agents' paths.
    \item The algorithm requires $2n$ agents to be present in $\textbf{R}$ before termination can be detected, but only $n$ agents are necessary to cover $\textbf{R}$.
\end{enumerate}

The Single-Layer Unlimited Gradient algorithm (Algorithm \ref{algorithm:SLUG}) addresses these weaknesses through three straightforward improvements to DLLG. The first improvement SLUG  makes is using ``looser'' (i.e., less constrained) agent traversal rules. In DLLG, a mobile agent $a_i$ can only move to a beacon whose step count is $s_i+1$. In  SLUG, $a_i$ can move to any beacon whose step count is greater than $s_i$ (if $a_i$’s step count is $5$, it can move to a beacon with step count $6$, or $10$, or $100$, etc.). Under this \textit{``unlimited gradient''} traversal rule, the agents still implicitly generate and traverse a DAG $\mathcal{D}$ that spans the cells of $R$, but $\mathcal{D}$ has many more edges for the agents to traverse.

The second improvement of SLUG is its  better use of stigmergy and beacons. DLLG is heavily redundant, requiring both a mobile agent and a beacon at every location, and closed beacons are used only to signal the algorithm’s termination to an external observer. In SLUG, closed beacons back-propagate without requiring a mobile agent to be on top of them, meaning that we only require a beacon at every vertex to attain termination (hence ``Single-Layer'' Unlimited Gradient). Moreover, in SLUG, closed beacons are used as a signal for mobile agents not to enter already-explored regions.

The third improvement is giving mobile agents flying above closed beacons the ability to \textit{descend} (rather than ascend) the gradient of step counts, enabling them to move from fully explored areas to either unexplored areas or back to the entry point, if they cannot find any. Specifically, whereas mobile agents located at open beacons attempt to \textit{ascend} the gradient (Algorithm \ref{algorithm:SLUG}, lines 6-8), mobile agents located at closed beacons attempt to \textit{descend} the gradient (lines 9-11). This modification speeds up the exploration process. It also naturally results in redundant mobile agents going back toward an entry point at the end of the algorithm's execution, should the operator wish to recover them. We find this feature important enough to point out, despite the matter of mobile agent recovery not being a main focus of this work. 

\begin{algorithm}[!ht]
\caption{Local rule for agent $a_i$ waking up at time $t+dt$ and  executing Single-Layer Limited Gradient.}\label{algorithm:SLUG}
\begin{algorithmic}[1]
\Require $a_i$ has entered $\textbf{R}$ 
\If{$a_i$ is a ``mobile'' agent}
\If{$\xi(a_i,t)$ contains an empty cell $u$}
\State $a_i$ moves to $u$ 
\State $a_i$ sets state to ``open beacon''
\State $s_i \gets s_i + 1$ 
\ElsIf{$\xi(a_i,t)$ contains a cell $u$ with just one agent $a_j$ \textbf{and} $a_j$ is an open beacon \textbf{and} $s_j > s_i$}
\State $a_i$ moves to $u$
\State $s_i \gets s_j$ 
\ElsIf{$a_i$'s current location contains a  closed beacon \textbf{and} $\xi(a_i,t)$ contains a cell $u$ with just one agent $a_j$ \textbf{and} $a_j$ is an open or closed beacon \textbf{and} $s_j < s_i$} 
\State $a_i$ moves to $u$
\State $s_i \gets s_j$ 
\EndIf
\ElsIf{$a_i$ is an open beacon \textbf{and} every beacon $a_j \in \xi(a_i,t)$ with $s_j > s_i$ is closed \textbf{and} $\xi(a_i,t)$ does not contain an empty cell}
\State $a_i$ sets state to ``closed beacon''
\EndIf
\end{algorithmic}
\end{algorithm}

An example run of SLUG in a square grid region  is depicted in Figure \ref{fig:SLUGexamplefigure}. Simulation results suggest that SLUG terminates faster than DLLG, and that its total energy use tends to be an order of magnitude smaller (see Section \ref{section:simulations}).

\begin{figure}[ht]
    \begin{adjustwidth}{\figureWidthAdjustment}{\figureWidthAdjustment}
     \centering
     \begin{subfigure}[t]{0.231\linewidth}\hfil
         \centering
         \includegraphics[width=\linewidth]{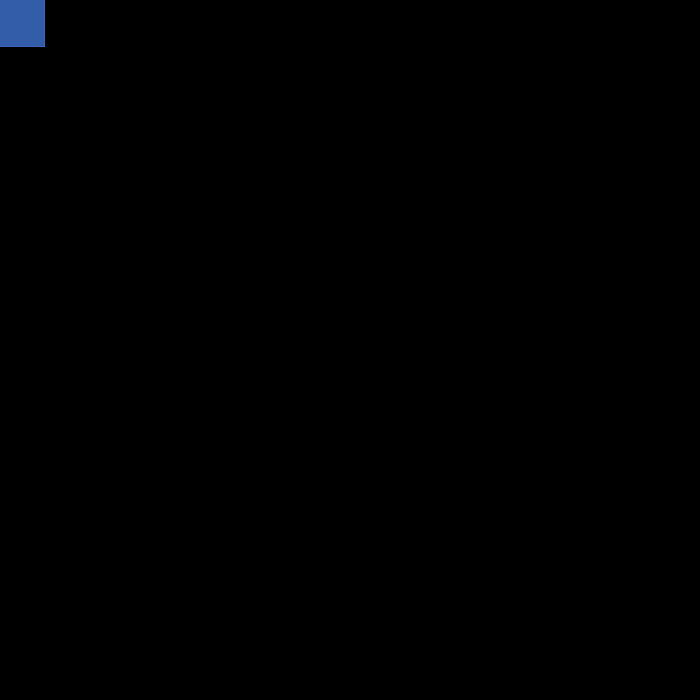}
     \end{subfigure}
     \begin{subfigure}[t]{0.231\linewidth}\hfil
         \centering
         \includegraphics[width=\linewidth]{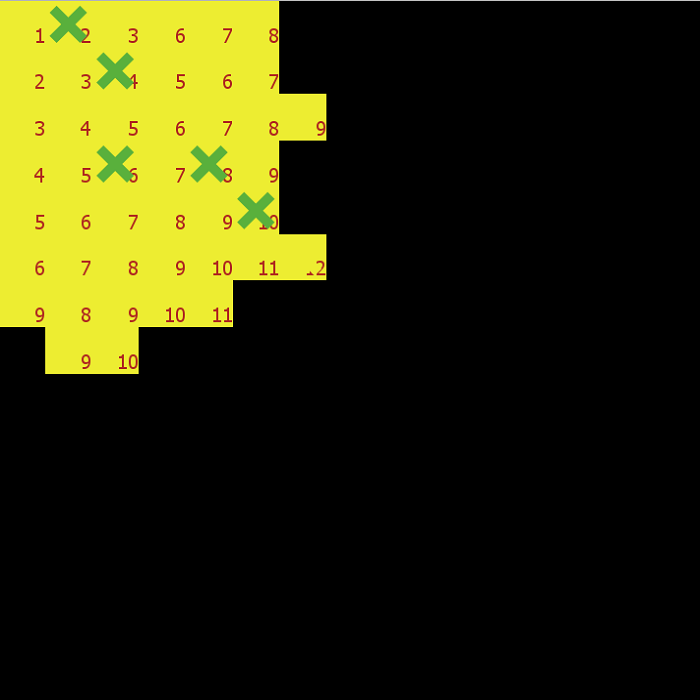}
     \end{subfigure}
     \begin{subfigure}[t]{0.231\linewidth}\hfil
         \centering
         \includegraphics[width=\linewidth]{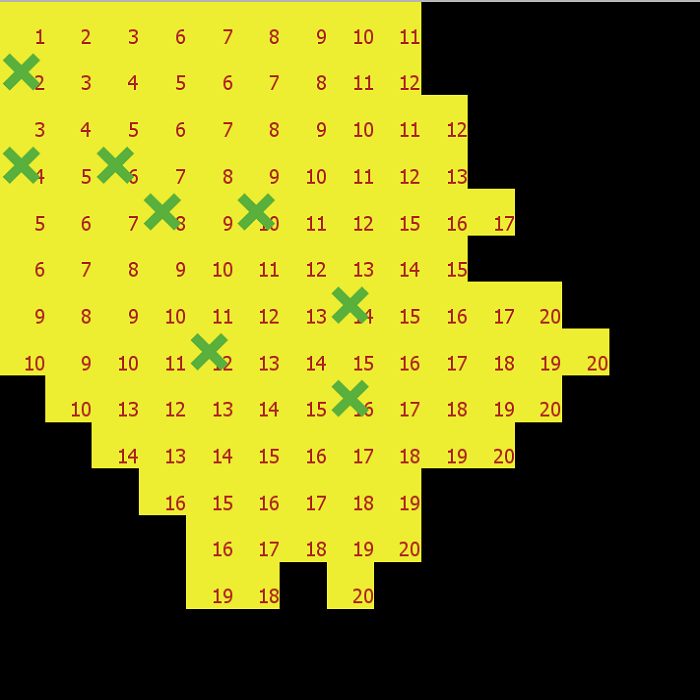}
     \end{subfigure}
     
     \begin{subfigure}[t]{0.231\linewidth}\hfil
         \centering
         \includegraphics[width=\linewidth]{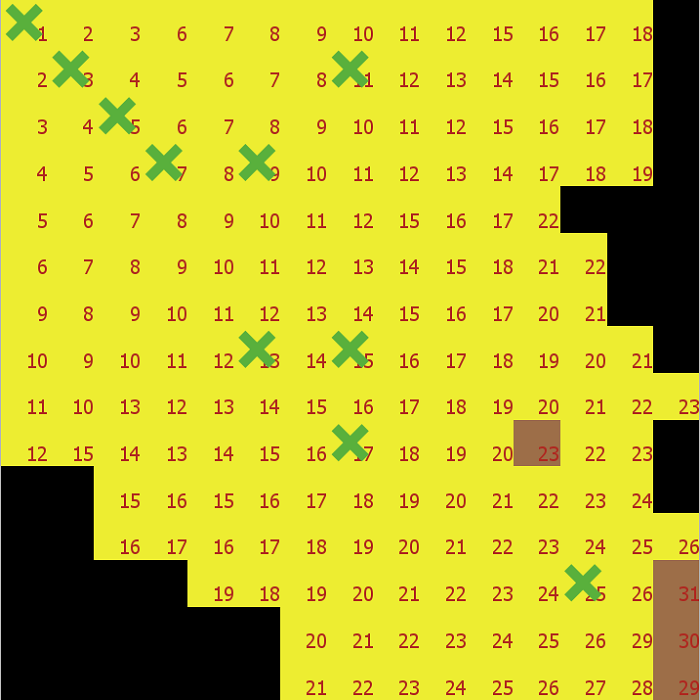}
     \end{subfigure}
     \begin{subfigure}[t]{0.231\linewidth}\hfil
         \centering
         \includegraphics[width=\linewidth]{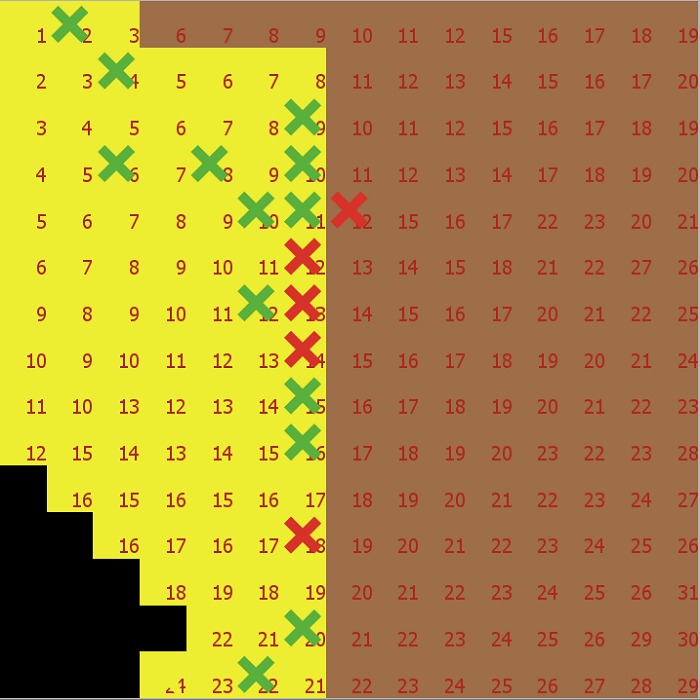}
     \end{subfigure}
     \begin{subfigure}[t]{0.231\linewidth}
         \centering
         \includegraphics[width=\linewidth]{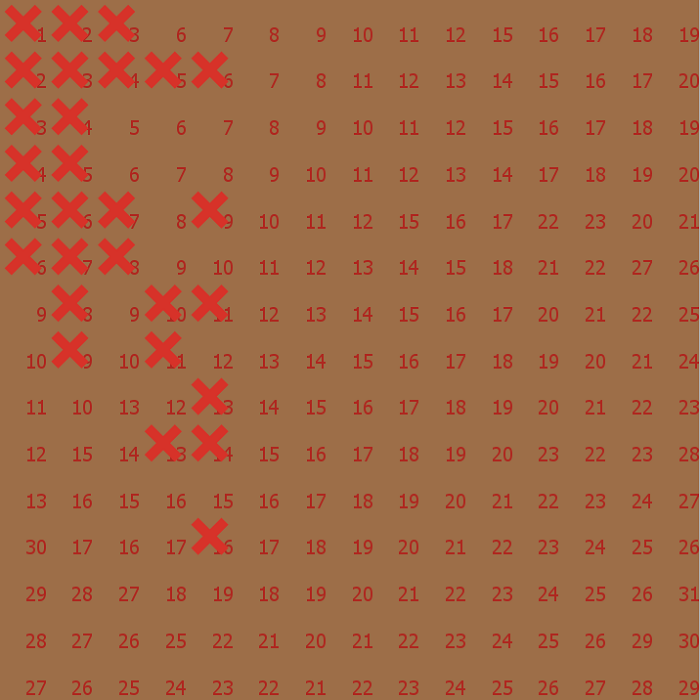}
     \end{subfigure}
     \end{adjustwidth}
     \vspace{-2mm}
    \caption{Snapshots from a run of SLUG  ordered left to right, top to bottom. Empty cells are black, obstacles are white, cells containing an open beacon are yellow and cells containing a closed beacon are brown. Beacons' step count is denoted by the red numbers. Green ``X''s are mobile agents that have never entered a closed beacon's location, and red ``X''s are mobile agents that have. Red agents descend the gradient until they find new open beacons. $\lceil T_C(\textbf{R}) \rceil = 508$.}
    \label{fig:SLUGexamplefigure}
\end{figure}
\vspace{-2mm}

\subsubsection*{``Ant-like'' recovery from crash faults and errors.}
\label{section:crashfaults}

An often-sought property of multi-robot systems is their ability to recover from crash faults and memory errors. The mobile agent layer of SLUG is provably resilient to crash faults: a mobile robot can crash and disappear from the environment at any time without affecting the algorithms' correctness (and often barely affecting termination time). Furthermore, three straight-forward changes to SLUG can be implemented to make the mobile agents provably resilient to memory corruption and positional errors:

\begin{enumerate}
    \item Instead of keeping track of its own step count, have the mobile agent $a_i$ read $s_i$ from the beacon underneath it at each wake-up.
    \item If, for some unexpected reason, a mobile agent $a_i$ does not see a beacon underneath it at wake-up, $a_i$ stays put and does nothing.
    \item A mobile agent $a_i$ that senses a location $u$ containing another mobile agent $a_j$ but no underlying beacon treats $u$ as an empty cell (Algorithm \ref{algorithm:SLUG}, lines 1-5), i.e., $a_i$ can settle underneath $a_j$ and become a beacon.
\end{enumerate}

With these changes in place, mobile agents become completely oblivious and ``ant-like,'' basing their decisions only on local sensing-based information \cite{yanivwagner2008cooperative}. Each mobile agent $a_i$ is thus stateless and its step count cannot be corrupted. Additionally, if (due to e.g. a sensing error)  $a_i$'s position changes unexpectedly, $a_i$ can always recover: if its new location contains a beacon $a_j$, $a_i$ simply continues executing SLUG as normal with its step count equalling $a_j$'s; if its new location is an empty cell, $a_i$ simply waits for a mobile agent to settle underneath it and become a beacon.


\vspace{-4mm}
\section{Simulation results and benchmarks}
\label{section:simulations}
In this section we employ numerical simulations to study the performance of SLUG and DLLG in terms of energy,  termination time, and other related parameters, and compare these algorithms to well-known algorithms from the literature. 

Recall that SLUG makes two independent conceptual improvements over DLLG: (i) it has looser traversal rules, and (ii) it makes better use of closed beacons.
It is interesting to ask what each of these improvements provides: what is the the impact of looser, or stricter traversal rules on DLLG's termination time and energy use? What is the impact of the closed beacon strategy?

We investigate these questions numerically by comparing the termination time and energy use of DLLG to SLUG and to two variants of DLLG: a variant with stricter traversal rules called \textit{Dual-Layer Tree Traversal} (DLTT), and a variant with looser traversal rules called \textit{Dual-Layer Unlimited Gradient} (DLUG). All three Dual-Layer algorithms operate under the same underlying principle of gradient climbing and dual-layer coverage, filling $\textbf{R}$ with $n$ mobile agents and $n$ closed beacons—the only difference is the underlying directed acyclic graph which the agents construct and traverse, $\mathcal{D}$. In DLTT, an algorithm first described in \cite{amir_fast_2019} and which can directly be compared to our algorithms as a benchmark, $\mathcal{D}$ is restricted to be a tree. This results in more limited traversal than DLLG, where $\mathcal{D}$ is not necessarily a tree. In DLUG, similar to SLUG, the agents are not restricted in terms of gradient steepness: an agent $a_i$ can move to any neighboring beacon whose step count is at least $s_i+1$, thus $\mathcal{D}$ is has more edges compared to DLLG (where $a_i$ can only move to a neighboring beacon whose step count is exactly $s_i+1$), and mobile agents' traversal rules are consequently less restricted. Comparing DLUG, DLLG and DLTT gives us insight into the impact of traversal rules on the energy and time costs of a dual-layer coverage algorithm. We shall show that DLUG, which has the least restrictive gradient ascent rules, outperforms DLGG and DLTT. Comparing these algorithms to SLUG gives us insight into the cumulative impact of using looser traversal rules \textit{and} the closed beacon strategy. We find that SLUG  significantly outperforms DLLG, DLTT and DLUG. 

Our numerical simulations were run on the regions depicted in Figure \ref{fig:simulationexampleregions}. The top three regions are simple regions that can be generalized to different sizes, and through which the effect of region size on termination time and energy use can be  investigated. The three regions depicted in the bottom row were replicated from \cite{ferranti_brick_2007} as  examples of complex regions that more closely resemble real-life scenarios. Each data point in the figures of this section is the average of data from 50 runs of an algorithm on one of these regions (when simulating the time model we set $M=100$).

\begin{figure}[ht]
    \begin{adjustwidth}{\figureWidthAdjustment}{\figureWidthAdjustment}
     \centering
     \begin{subfigure}[t]{0.22\linewidth}\hfil
         \centering
         \includegraphics[width=\linewidth]{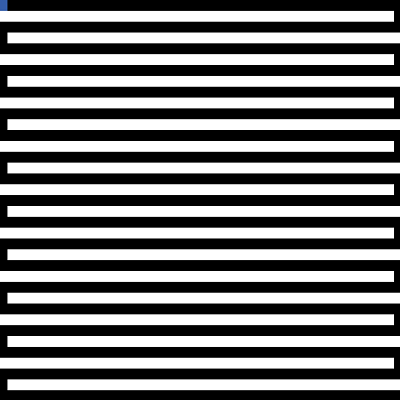}
         \subcaption{}
     \end{subfigure}
     \begin{subfigure}[t]{0.22\linewidth}\hfil
         \centering
         \includegraphics[width=\linewidth]{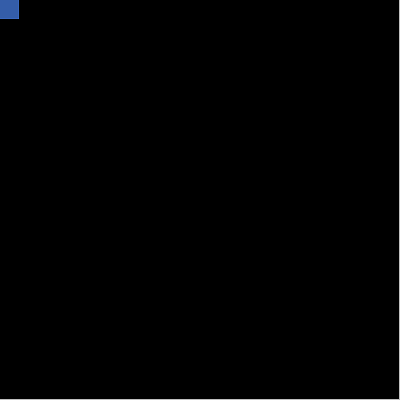}
         \subcaption{}
     \end{subfigure}
     \begin{subfigure}[t]{0.22\linewidth}\hfil
         \centering
         \includegraphics[width=\linewidth]{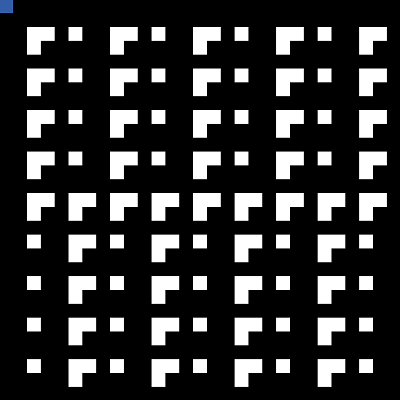}
         \subcaption{}
     \end{subfigure}
     
     \begin{subfigure}[t]{0.22\linewidth}\hfil
         \centering
         \includegraphics[width=\linewidth]{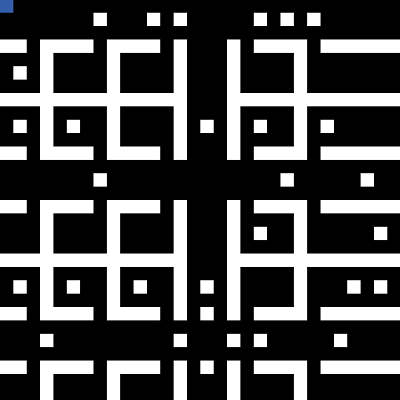}
         \subcaption{}
     \end{subfigure}
     \begin{subfigure}[t]{0.22\linewidth}\hfil
         \centering
         \includegraphics[width=\linewidth]{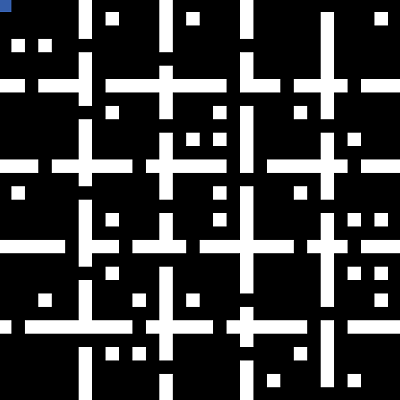}
         \subcaption{}
     \end{subfigure}
     \begin{subfigure}[t]{0.22\linewidth}
         \centering
         \includegraphics[width=\linewidth]{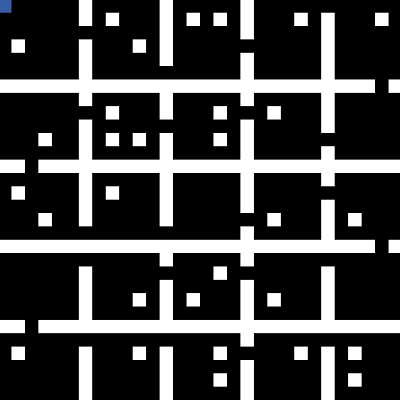}
         \subcaption{}
     \end{subfigure}
     \end{adjustwidth}
     \vspace{-2mm}
    \caption{Regions used in simulation experiments. Obstacles are marked in white and  empty cells are marked in black. The entry point is at the upper-left corner, marked in blue. Region labels are as follows: (a) linear (b) square (c) saw tooth (d) complex 1 (e) complex 2 (f) complex 3.
    Regions (a), (b) and (c) can be generalized to different sizes - e.g., we run simulations on linear regions of different sizes.}
    \vspace{-4.5mm}
    \label{fig:simulationexampleregions}
\end{figure}

The impact of $\Delta T$ on DLLG and SLUG's termination time step $\lceil T_C(\textbf{R}) \rceil$ is investigated in Figure \ref{fig:DLLGcoverageDeltaT}, (a)  for the 6 types of regions depicted in Figure \ref{fig:simulationexampleregions}. We see that for every value of $\Delta T$ and every region, SLUG's termination time is significantly lower than DLLG's. Figure \ref{fig:DLLGcoverageDeltaT}, (a) also confirms the termination time bounds of Theorem \ref{theorem:dllgterminationtime} for DLLG: as claimed, we clearly see that $\lceil T_C(\textbf{R}) \rceil/n \approxeq 2\Delta T$ in the plot of Figure \ref{fig:DLLGcoverageDeltaT}, (a) corresponding to DLLG. Figure \ref{fig:DLLGcoverageDeltaT}, (b) investigates the number of agents present in a region at termination time when using SLUG and DLLG. By construction, every region always contains $2n$ agents at DLLG's termination, but when it comes to SLUG, the ratio of agents to region size quickly converges to $1$ as $\Delta T$ goes to infinity. In other words, as $\Delta T$ grows, there are less redundant agents in the environment. \textit{Hence, by increasing $\Delta T$ in SLUG, we can reduce the number of redundant agents at the cost of larger makespan.}

Figure \ref{fig:assorteDLLGDLUGDLTT} depicts the effect of region size on the termination time and total energy use of SLUG, DLTT, DLLG and DLUG when $\Delta T = 1$ (Figure \ref{fig:assorteDLLGDLUGDLTT}, (a) and (c)) and when $\Delta T = 2$ (Figure \ref{fig:assorteDLLGDLUGDLTT}, (b) and (d)). We see that SLUG significantly outperforms DLTT, DLGG and DLUG in terms of energy use and termination time. Additionally, several facts about the impact of traversal rules can be inferred from the figure: first, looser traversal rules confer a benefit on energy use:  $E_{total}$ is reduced in DLUG compared to DLTT and DLLG. Next, when $\Delta T \geq 2$, termination time is unaffected by the looseness or strictness of traversal rules. However, when $\Delta T = 1$, looser traversal rules improve termination time: of the three dual-layer algorithms, DLUG's termination time is lowest, followed by DLLG.

\begin{figure}[ht]
    \begin{adjustwidth}{\figureWidthAdjustment}{\figureWidthAdjustment}
     \centering
     \begin{subfigure}[t]{0.455\linewidth}\hfil
         \centering
         \includegraphics[width=\linewidth]{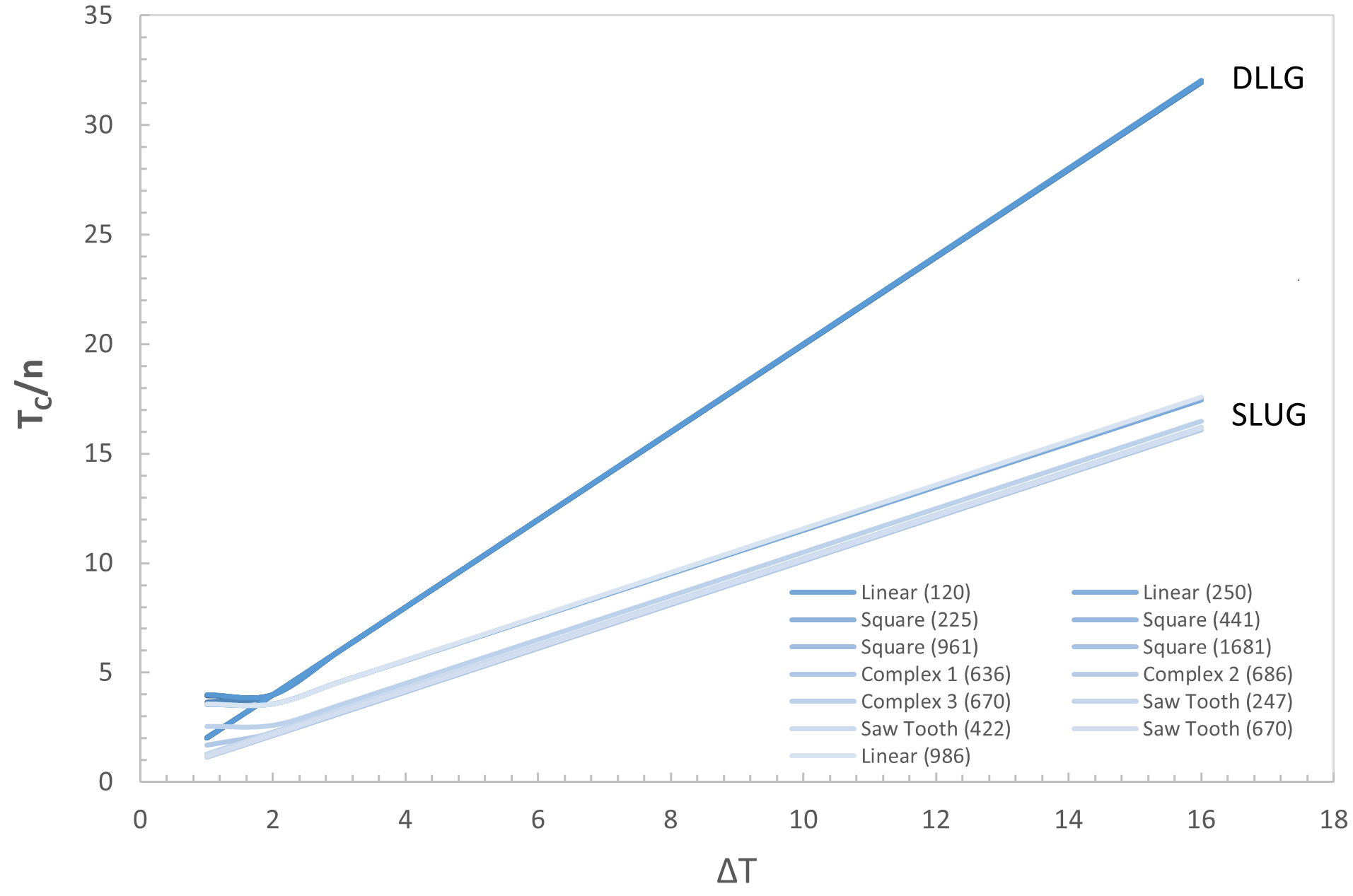}
         \subcaption{}
     \end{subfigure}
     \begin{subfigure}[t]{0.455\linewidth}\hfil
         \centering
         \includegraphics[width=\linewidth]{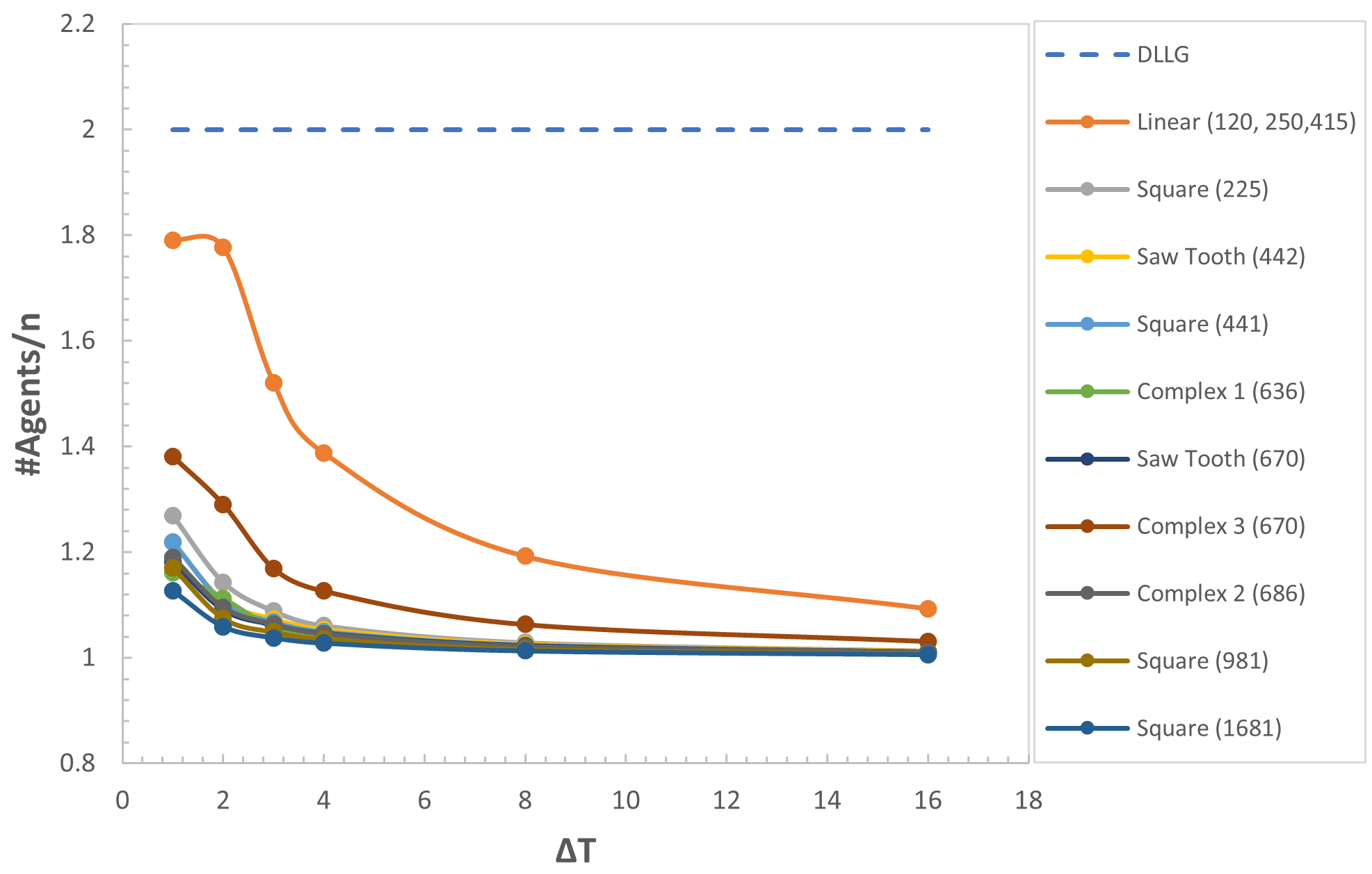}
         \subcaption{}
     \end{subfigure}
     \end{adjustwidth}
     \vspace{-2mm}
    \caption{(a) Each line plots $\lceil T_C(\textbf{r}) \rceil/n$ (termination time step divided by region size) against $\Delta T$ in simulations of DLLG or SLUG on one of the regions of Figure \ref{fig:simulationexampleregions}. The color of a line determines the region, as detailed in the legend. The parentheses in the legend indicate region size.  We plot two lines of the same color per region - one corresponding to simulations of DLLG on that region, and one corresponding to simulations of SLUG. Lines corresponding to SLUG's termination time on a region are strictly below those corresponding to DLLG, indicating that SLUG outperforms DLLG. (b) Each line plots the number of agents present in a region at termination time divided by the size of the region, against $\Delta T$.}
    \label{fig:DLLGcoverageDeltaT}
\end{figure}
\vspace{-2mm}

\begin{figure}[ht]
    \begin{adjustwidth}{\figureWidthAdjustment}{\figureWidthAdjustment}
     \centering
     \begin{subfigure}[t]{0.43\linewidth}\hfil
         \centering
         \includegraphics[width=\linewidth]{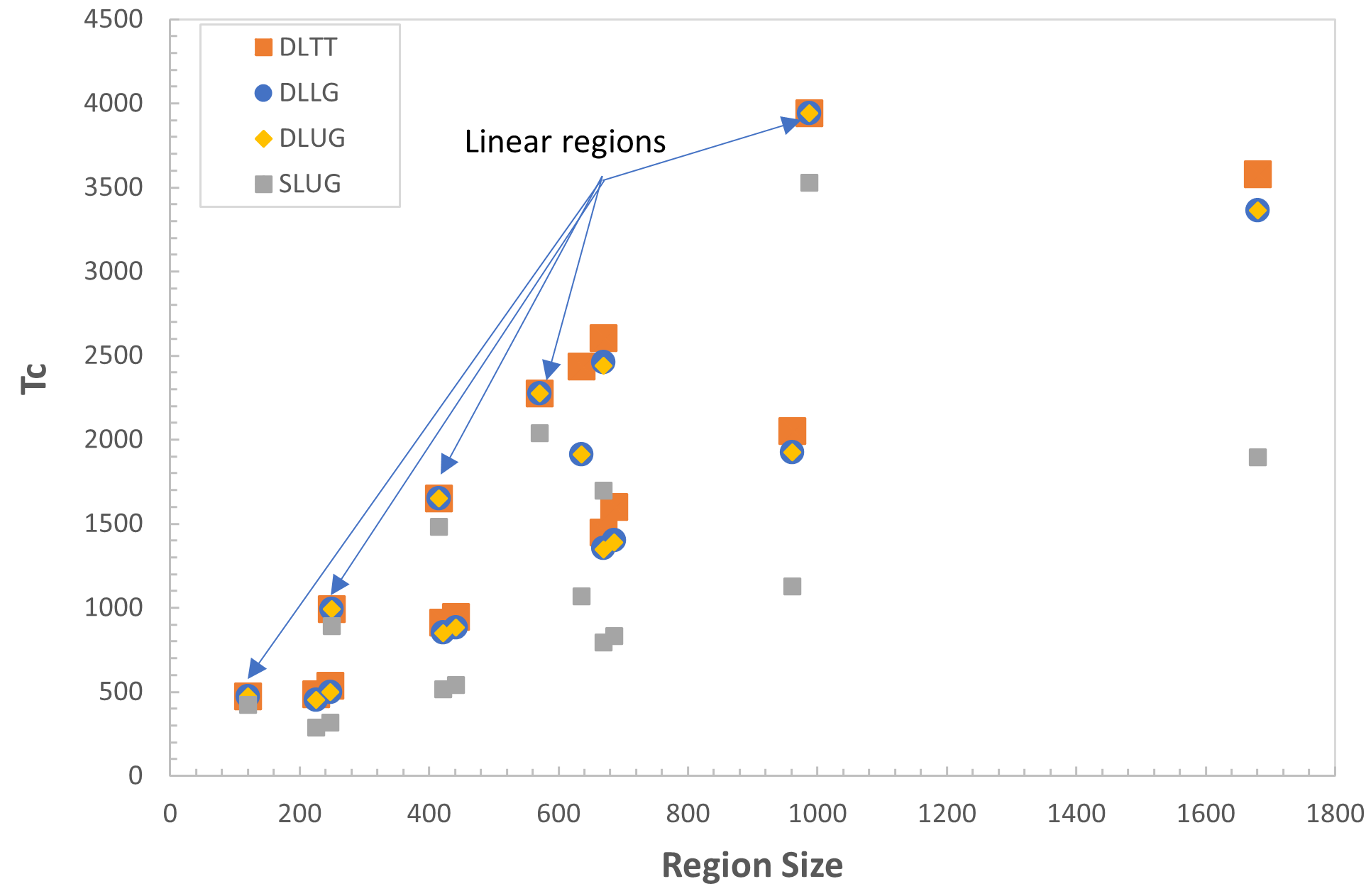}
         \subcaption{}
     \end{subfigure}
     \begin{subfigure}[t]{0.43\linewidth}\hfil
         \centering
         \includegraphics[width=\linewidth]{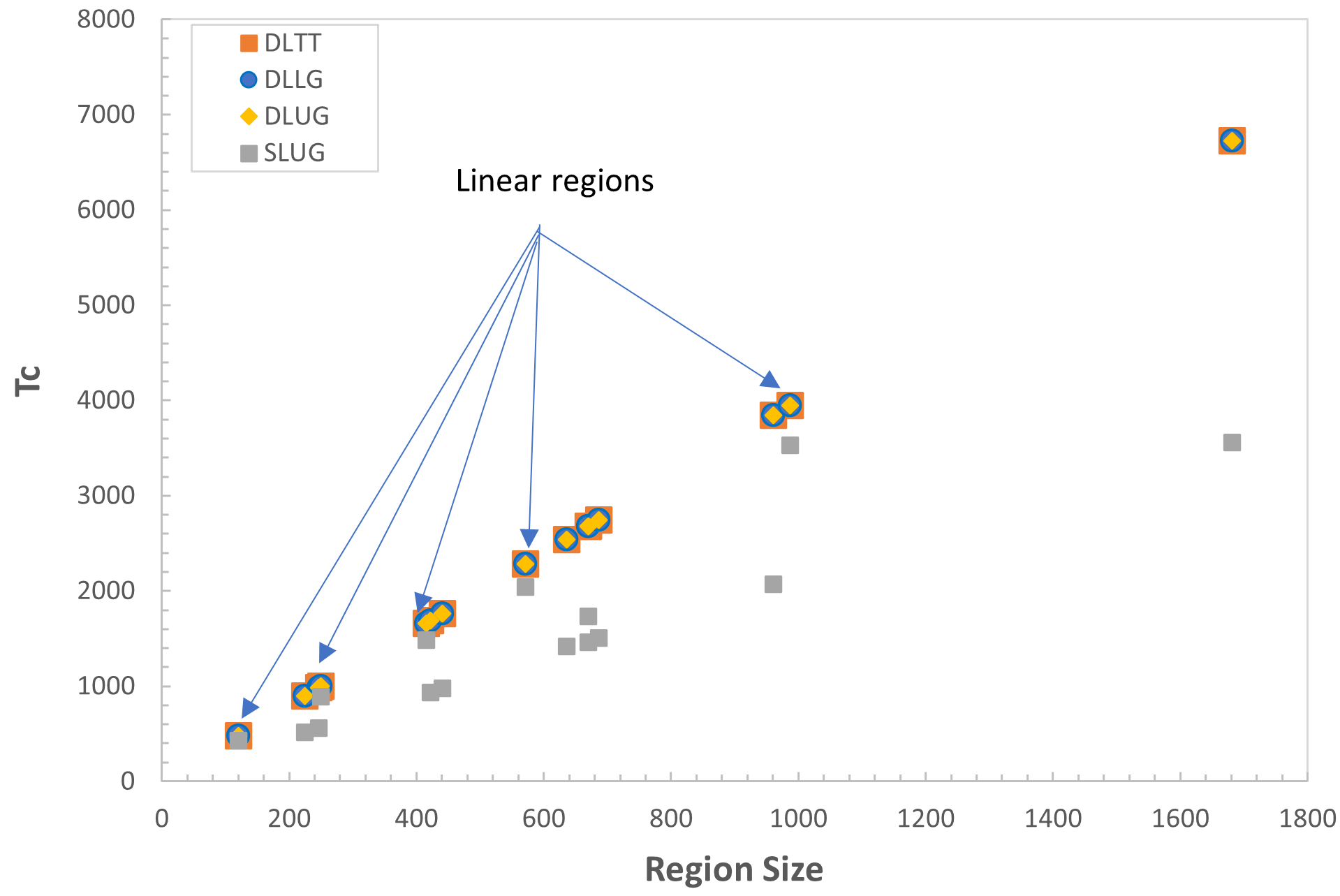}
         \subcaption{}
     \end{subfigure}

     \begin{subfigure}[t]{0.43\linewidth}\hfil
         \centering
         \includegraphics[width=\linewidth]{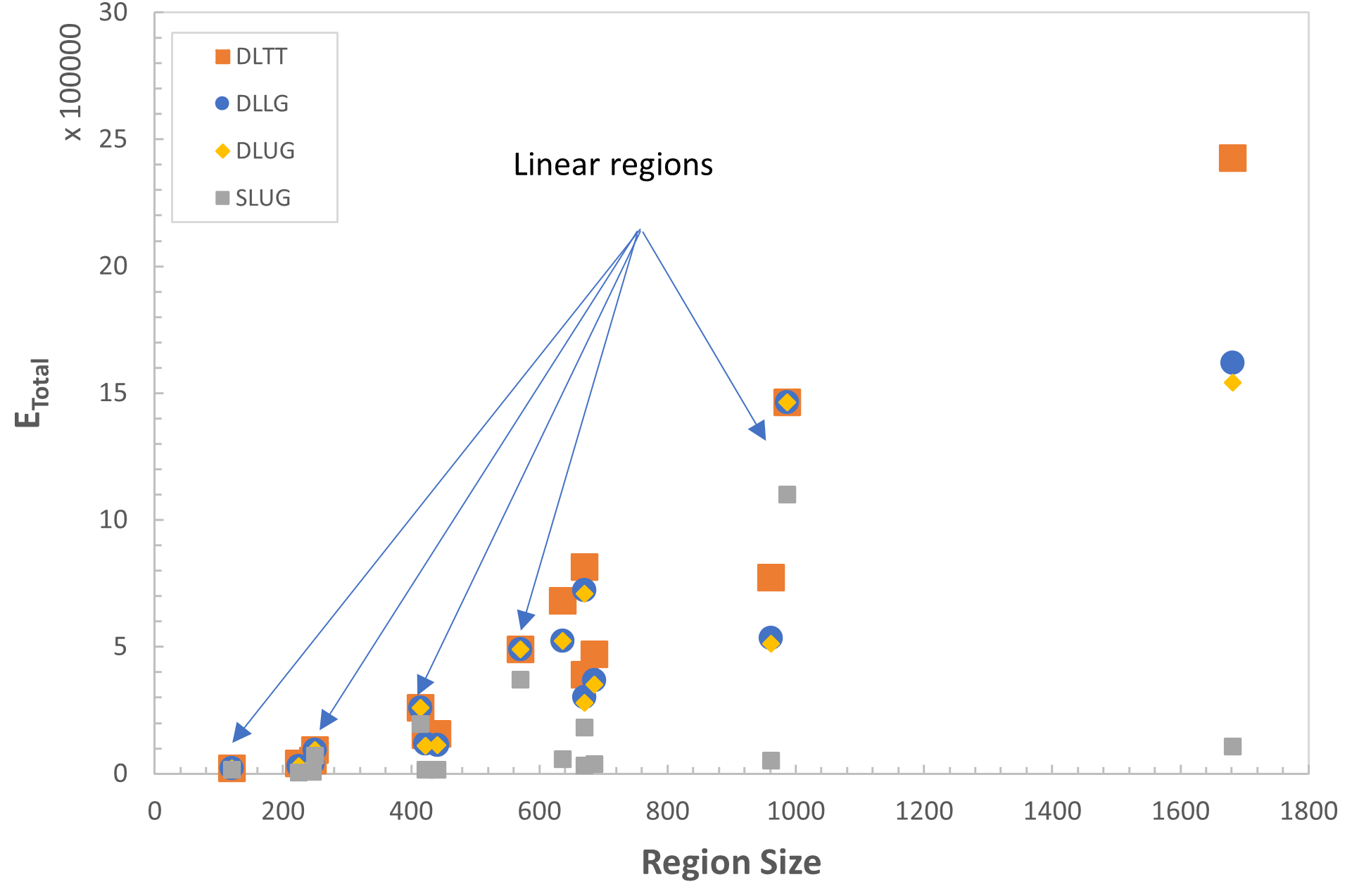}
         \subcaption{}
     \end{subfigure}
     \begin{subfigure}[t]{0.43\linewidth}
         \centering
         \includegraphics[width=\linewidth]{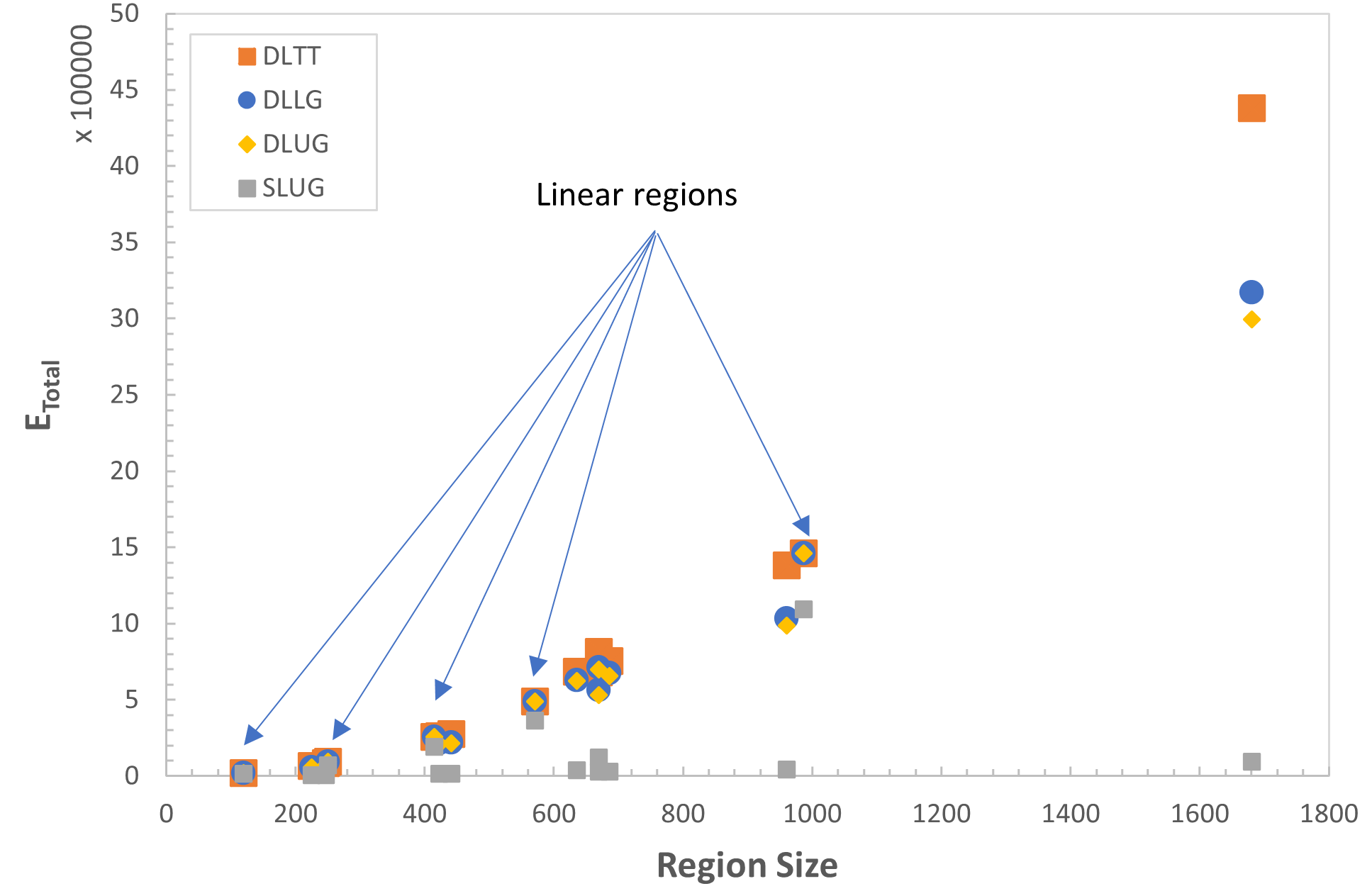}
         \subcaption{}
     \end{subfigure}
     \end{adjustwidth}
     \vspace{-2mm}
    \caption{(a) Each data point denotes the average termination time step $\lceil T_C(\textbf{r}) \rceil$  of DLTT (orange square), DLLG (blue cirle), DLUG (yellow diamond) or SLUG (gray square) on one of the region types in Figure \ref{fig:simulationexampleregions} plotted against region size, assuming $\Delta T = 1$. 
    Linear regions  of various sizes  (Figure \ref{fig:simulationexampleregions}, (a)) had the worst termination times, and data points representing these regions are marked by blue arrows. (b) Each data point denotes average termination time step when $\Delta T = 2$. 
    (c) Each data point denotes total energy use step when $\Delta T = 1$. (d) Each data point denotes total energy use when $\Delta T = 2$.}
    \label{fig:assorteDLLGDLUGDLTT}
    \vspace{-2mm}
\end{figure}


\subsubsection*{Benchmarks from the literature.} We compared SLUG to the well-known Depth-First and Breadth-First Leader-Follower coverage algorithms of Hsiang et al. \cite{hsiang_algorithms_2004} in terms of termination time and energy use. Hsiang's algorithms assume synchronous time, a greater sensing range than ours, and direct radio-based inter-agent communication.
Due to these differing assumptions, an exact one-to-one comparison of our work and \cite{hsiang_algorithms_2004} is not possible--but we can nevertheless draw some conclusions.
Figure \ref{fig:SLUGhsiangcomparison}, (a) compares the termination time of SLUG to Hsiang et al.'s algorithms on square grid regions. Assuming $\Delta T = 1$, SLUG's termination time is significantly lower than BFLF or DFLF's, whereas when $\Delta T = 2$, its termination time is slightly higher. Figure \ref{fig:SLUGhsiangcomparison}, (b) compares total energy use, showing that SLUG's energy use (in both the $\Delta T = 1$ and $\Delta T = 2$ cases) is significantly lower than either of Hsiang's algorithms (by nearly an order of magnitude on a 41x41 square grid). Figure \ref{fig:SLUGhsiangcomparison}, (c) shows that SLUG's maximum energy use (Definition \ref{def:maximumenergy}) is significantly lower than BFLF and DFLF's as well. These findings suggest that the beacon-based approach results in large energy savings compared to BFLF and DFLF, despite the fact that (unlike Hsiang et al.) SLUG inserts more than $n$ robots into the region, thus inevitably ``wastes'' energy on redundant mobile robots that are not needed to complete the coverage.

Another interesting metric that Hsiang et al. measures is the \textit{total travel}, i.e., the total number of movements the robots make. We measured total travel (not depicted in the Figure) and found that SLUG significantly outperforms DFLF, and is comparable to BFLF in terms of total travel on all of our test regions.

\begin{figure}[ht]
    \begin{adjustwidth}{-0.9cm}{-0.9cm}
     \centering
     \begin{subfigure}[t]{0.41\linewidth}\hfil
         \centering
         \includegraphics[width=\linewidth]{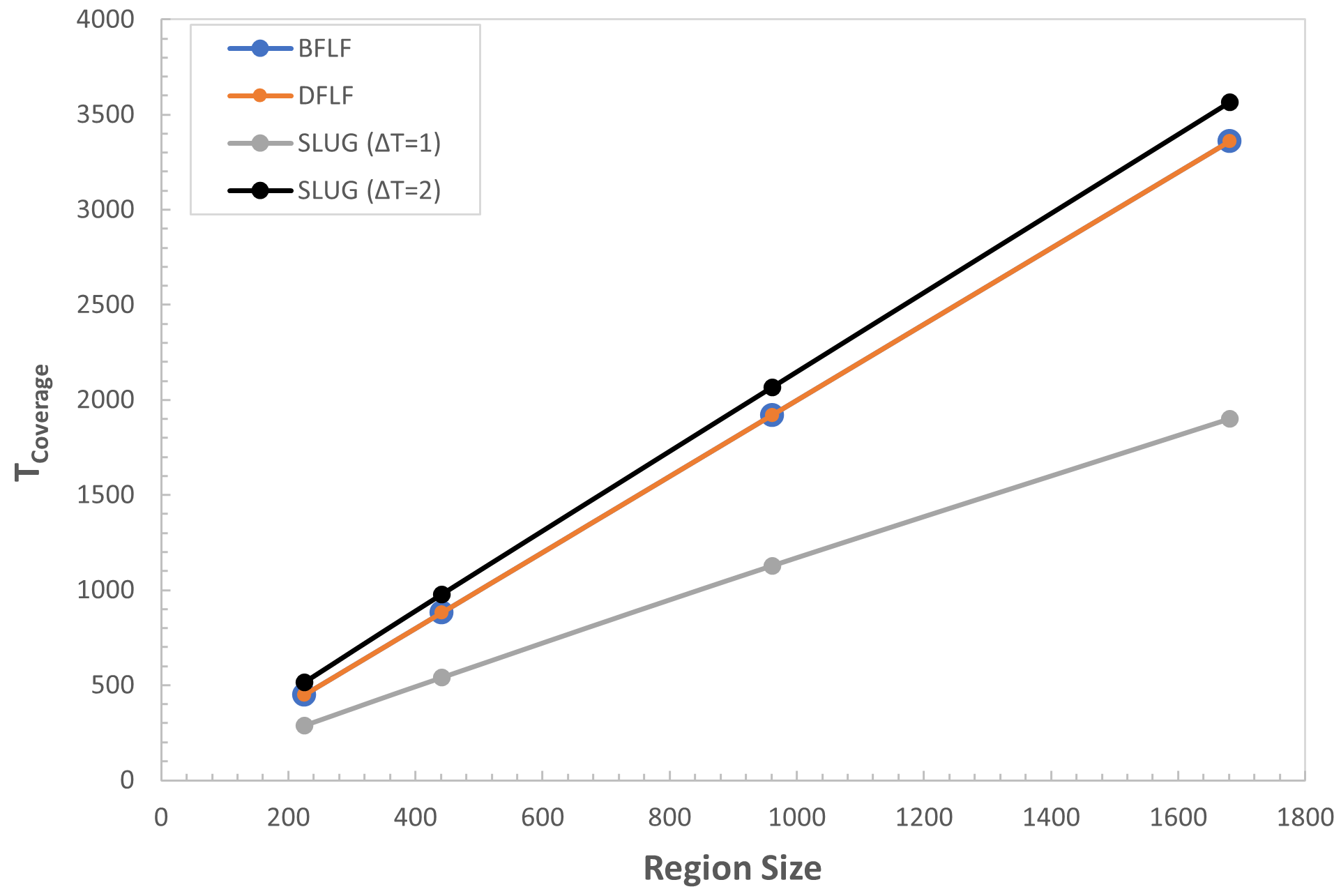}
         \subcaption{}
     \end{subfigure}
     \begin{subfigure}[t]{0.41\linewidth}\hfil
         \centering
         \includegraphics[width=\linewidth]{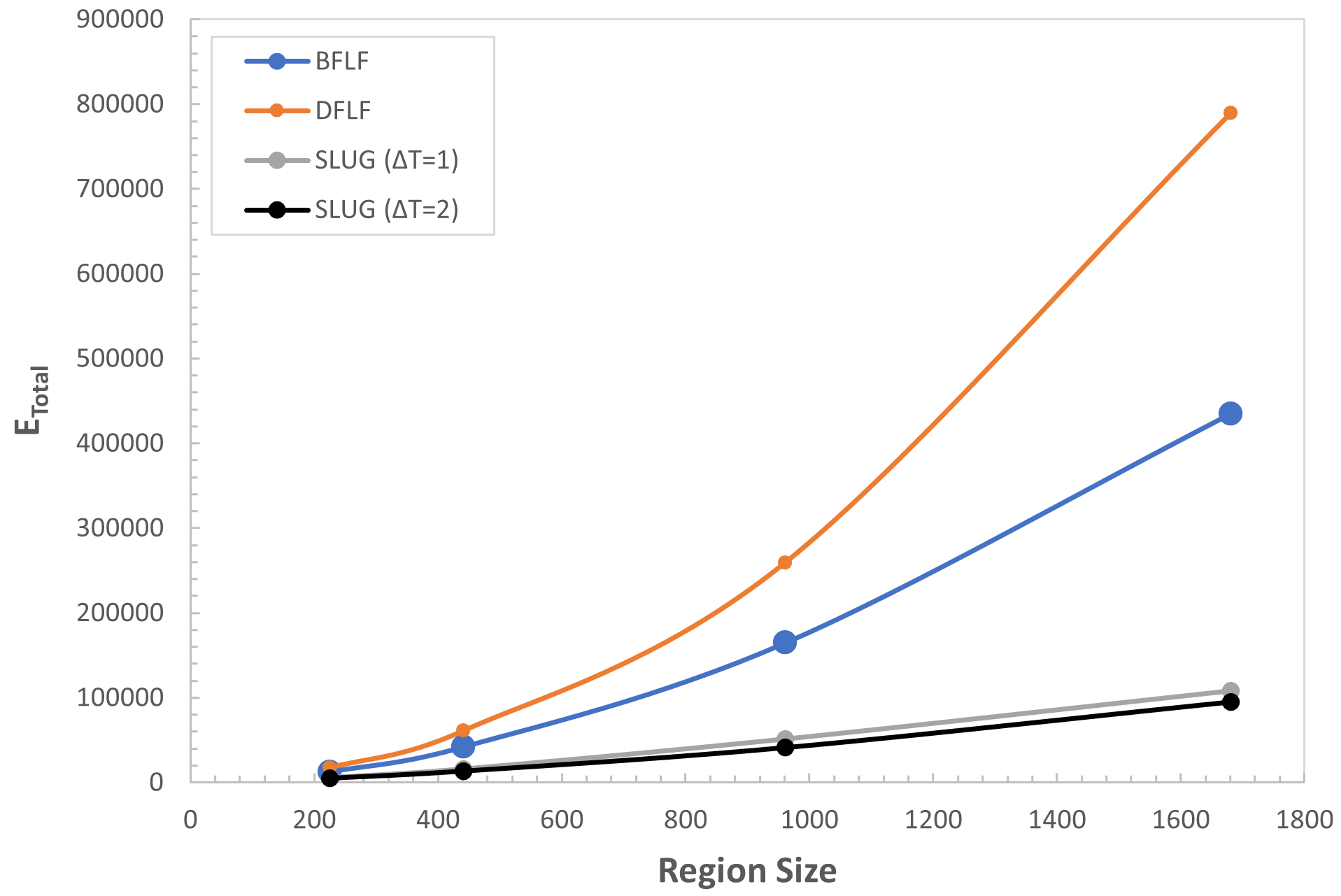}
         \subcaption{}
     \end{subfigure}

     \begin{subfigure}[t]{0.41\linewidth}\hfil
         \centering
         \includegraphics[width=\linewidth]{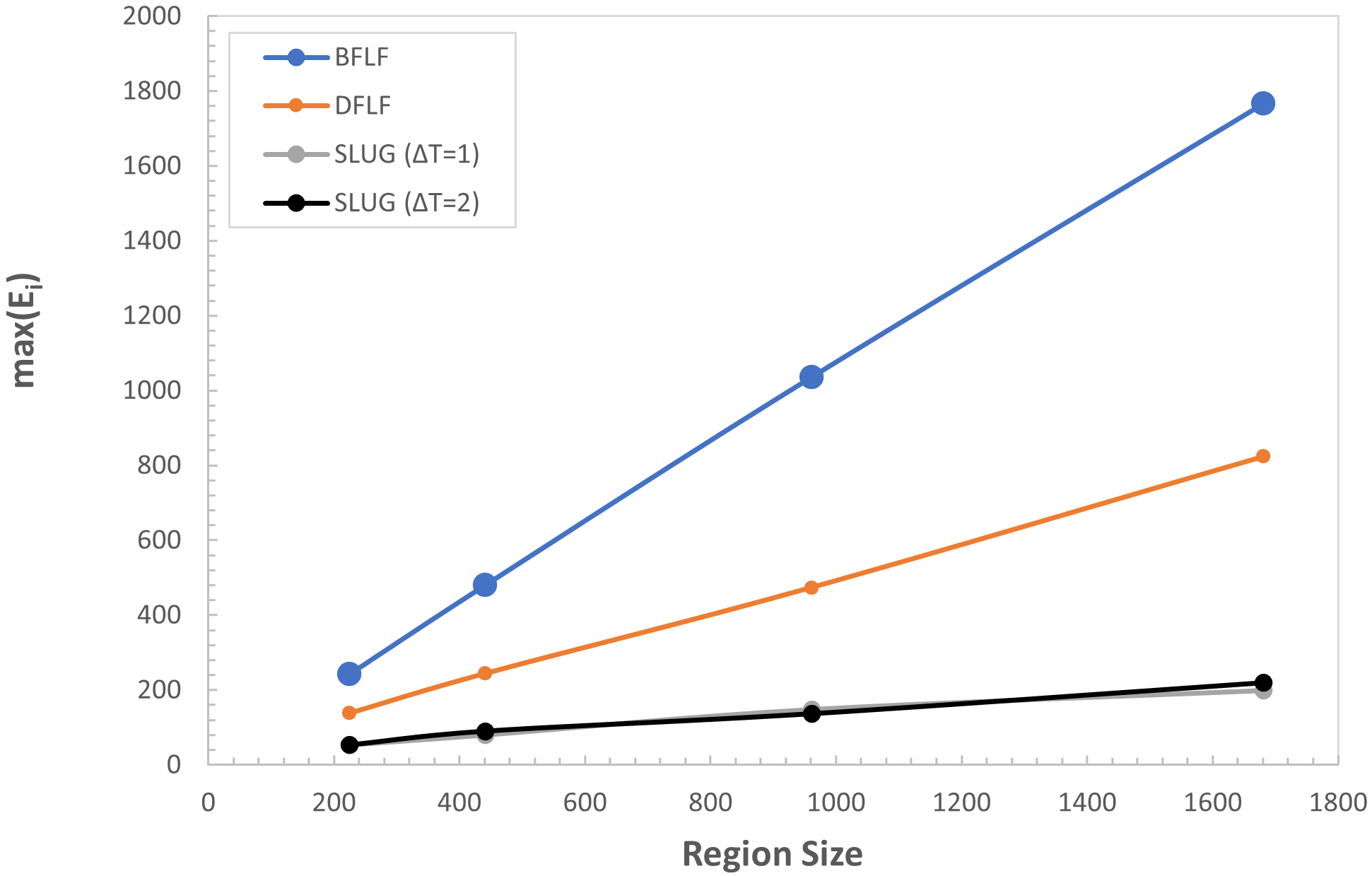}
         \subcaption{}
     \end{subfigure}
     \end{adjustwidth}
     \vspace{-2mm}
    \caption{(a) The termination time of SLUG and Hsiang et al.'s BFLF and DFLF algorithms on square grid regions of various sizes (Figure \ref{fig:simulationexampleregions}, (b)). (b) The total energy use of the algorithms on square grid regions. (c) The maximum individual energy (Definition \ref{def:maximumenergy}) use of the algorithms  on square grid regions.} 
    \label{fig:SLUGhsiangcomparison}
\end{figure}
\vspace{-3mm}


\section{Discussion}

In this work we investigated the problem of covering unknown indoor environments with a swarm of aerial robots. To address the energy and communication challenges that arise in this setting, we proposed algorithms that draw upon the meta-concept of ``stigmergy'' - communication via the environment. In our algorithms, robots that settle inside the region become part of the environment in the form of beacons that guide mobile robots to as-yet unexplored locations. We described two algorithms based on this concept, DLLG and SLUG, and studied their performance through formal mathematical analysis (of DLLG) and through simulation results. SLUG attains large energy savings over algorithms that appeared in previous work (\cite{amir_fast_2019} and \cite{hsiang_algorithms_2004}) and is competitive in terms of termination time and other metrics, hinting at the potential of beacon-based approaches. 

We view our results as a preliminary exploration of the advantages of beacon-based, dual-layer approaches in multi-robot deployment problems. Several extensions of our work readily come to mind. First, since we wanted to be able to derive rigorous mathematical results, we studied an idealized robot model in which movements are instantaneous and precise. In a real life implementation of beacon coverage, however, motion, sensing, and localization pose technical challenges that need to be addressed. In future work, we are interested in experimenting with  physical implementations, and are currently evaluating various platforms (e.g., Crazyflie 2.1). Next, we've shown that SLUG, by construction, directs  redundant mobile agents that are left over after coverage is complete toward one of the entry points, where they can be recovered. However, we did not formally model this process in this paper nor consider its impact on energy. We believe much further research should be done regarding the time and energy costs of agent recovery upon mission termination, both in this and broader contexts. Finally, given that we've shown several beneficial effects of  agent-beacon duality in swarm-robotic coverage tasks, we may naturally ask what other swarm robotics-related problems can benefit from this concept, and in what other ways this idea can be employed.



\bibliography{biblio.bib}
\bibliographystyle{plain}
\end{document}